\newcommand{\doi}[1]{doi:\href{https://doi.org/#1}{\nolinkurl{#1}}}
\newcommand{\xmark}{\ding{55}}
\newcommand*\circled[1]{\tikz[baseline=(char.base)]{\node[shape=circle,draw,inner sep=1.5pt] (char) {$#1$};}}
\newcommand{\B}{\mathbb{B}}
\newcommand\vd{1.4}
\newcommand\td{0.9}
\newcommand\sd{6}
\newcommand{\rd}[1]{\rho(#1)}
\newcommand\subs{\Sigma}
\newcommand{\dyn}{\mathrm{D}}
\newcommand{\D}{\mathrm{D}}
\newcommand{\AD}{\mathrm{AD}}
\newcommand{\GAD}{\mathrm{GD}}
\newcommand{\SD}{\mathrm{SD}}
\newtheorem{proposition}{Proposition}[section]
\newtheorem{theorem}[proposition]{Theorem}
\newtheorem{lemma}[proposition]{Lemma}
\theoremstyle{definition}
\newtheorem{definition}[proposition]{Definition}
\newtheorem{remark}[proposition]{Remark}
\newtheorem{example}[proposition]{Example}
\newtheorem{question}{Question}
\tikzset{notImplies/.style={-{Implies},decoration={markings, mark = at position 0.5 with {
         \node (tempnode) {$\slash$};}
         }, postaction={decorate}}}
\begin{document}
\title{Phenotype control and elimination\\of variables in Boolean networks}
\author{Elisa Tonello$^1$ 
and Loïc Paulev\'{e}$^2$}
\date{\begin{small}$^1$ Freie Universität Berlin, Germany\\
\texttt{elisa.tonello@fu-berlin.de}\\
$^2$ Univ. Bordeaux, CNRS, Bordeaux INP, LaBRI, UMR 5800, F-33400 Talence, France\\
\texttt{loic.pauleve@labri.fr}
\end{small}}

\maketitle

\begin{tikzpicture}[remember picture,overlay]
\hypersetup{hidelinks}
\node[anchor=north west,yshift=220pt,xshift=-40pt]
{\href{https://doi.org/10.24072/pci.mcb.100318}{\includegraphics[height=35mm]{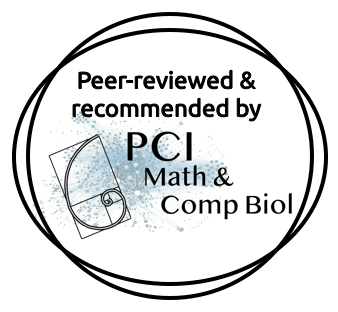}}};
\end{tikzpicture}

\begin{abstract}
  We investigate how elimination of variables can affect the asymptotic dynamics and
  phenotype control of Boolean networks.
  In particular, we look at the impact on minimal trap spaces,
  and identify a structural condition that guarantees their preservation.
  We examine the possible effects of variable elimination
  under three of the most popular approaches to control (attractor-based control,
  value propagation and control of minimal trap spaces),
  and under different update schemes (synchronous, asynchronous, generalized asynchronous).
  We provide some insights on the application of reduction, and an ample inventory of examples and counterexamples.
\end{abstract}

\section{Introduction}

In the investigation of complex systems where quantitative data is scarce,
one can resort to tracking only the absence or presence of the interacting entities.
Their interplay can be abstracted through logical rules,
resulting in the creation of a model usually known as a Boolean network \citep{schwab2020concepts,puvsnik2022review,kadelka2024meta}.
As models become larger and more elaborate, reduction techniques are adopted
to curb the model complexity (see e.g. \citet{naldi2009reduction,zanudo2013effective,veliz2014steady,argyris2023reducing}).
Among these, variable elimination is quite popular and natural.
It consists in the removal of an intermediate component, and a consequent rewiring
of the influence diagram to account for regulations that were mediated by this component.
The effects of such modifications are sometimes not intuitive, even in discrete dynamics.
For instance, while fixed points are always preserved,
the removal of a simple intermediate variable in a linear chain of variables of arbitrary length
can change the number of asynchronous cyclic attractors \citep{schwieger2024reduction}.
Here we make further investigations on the impact of elimination of variables on objects of interest
in the analysis of Boolean networks.
We look specifically at the effects on attractors, minimal trap spaces, and phenotype control strategies.

Attractors are often the first entities that are identified when building a model,
as they should capture the stable behaviours.
The consequences of variable elimination on attractors have been previously studied mostly
in the \emph{asynchronous} dynamics case \citep{naldi2009reduction,naldi2011dynamically,veliz2011reduction,schwieger2024reduction},
that is, under the update scheme where only one component can be updated in each transition.
Here we look also at other update choices.
The \emph{synchronous} update requires all changes to happen at the same time and, as we will see,
behaves more poorly than other updates with respect to variable elimination.
Besides the synchronous and asynchronous updates, we consider the \emph{general asynchronous} dynamics,
which allows the simultaneous update of any possible subset of the variables that can be updated,
and in particular contains all transitions of both the synchronous and asynchronous dynamics.
Even richer than the general asynchronous dynamics is the \emph{most permissive} dynamics,
which accounts for all possible behaviours that can be generated by multilevel versions of the Boolean network \citep{pauleve2020reconciling}.

Minimal trap spaces are interesting because they generally provide good approximations for attractors \citep{klarner2015approximating},
and at the same time are not as challenging to compute for Boolean biological models \citep{trinh2022minimal,moon2022computational}.
Under the most permissive semantics, attractors and minimal trap spaces coincide \citep{pauleve2020reconciling}.
Here we describe a simple structural property that guarantees preservation of minimal trap spaces (\cref{thm:min-ts}).
By ``structural'' we mean a condition on the interaction graph
which does not depend on the update chosen to generate the dynamics.
This particular condition requires that the variable being eliminated and its targets
have no regulators in common. When this condition is satisfied, we call the variable being eliminated a \emph{mediator}.

Identification of control strategies is one of the main objectives of
logical modelling of biological systems \citep{glass1973logical,zanudo2015cell,plaugher2023phenotype}.
Even in this rather niche context, control can be interpreted and approached in many ways,
e.g., by controlling nodes or edges, considering permanent, temporal or sequential interventions,
etc. (see for instance \citet{biane2018causal,sordo2020control,su2020sequential}).
Here we focus on \emph{phenotype control} achieved via permanent node interventions.
The objective is to find restrictions on the values of some variables
that are able to steer the dynamics towards a desired asymptotic behaviour.
We further distinguish between three type of interventions.
We consider attractor control strategies \citep{akutsu2012integer,zanudo2015cell,su2020dynamics,cifuentes2020control,cifuentes2022control}
that ensure that all attractors are contained in the desired phenotype;
a second type of control strategy, that guarantees that the minimal trap spaces are in the target phenotype \citep{pauleve2023reprogramming,riva2023tackling};
and a stronger class of interventions, which we call strategies by \emph{value propagation},
requiring that the fixed values propagate in the network until the phenotype variables are fixed \citep{samaga2010computing}.
Control strategies belonging to the latter category are probably the most popular,
for the following reasons: they are control strategies also under the other two definitions,
they apply independently of the update scheme, and can be identified quite efficiently
for example with Answer Set Programming \citep{kaminski2013minimal}.

After providing the formalization and notation required to address reduction and control in Boolean networks (\cref{sec:defs}),
we make preliminary observations about the effect of elimination of network components
on attractors and minimal trap spaces (\cref{sec:consequences}),
instrumental to the discussions about control in the last section (\cref{sec:control-reduction}).
We organise our investigations around two main questions: if a control strategy exists for a given phenotype in a Boolean network,
is a control strategy guaranteed to exist for a reduced version of the Boolean network?
And if a reduced Boolean network can be controlled for a given phenotype,
can we find a control intervention for the original network?
We consider the questions for all the aforementioned dynamics and control types,
for eliminated components that are mediators and in the general case.
We find that control strategies by value propagation are more robust to component elimination:
the first question can be answered always positively (\cref{thm:propagation}), and the second only partially
(\cref{ex:prop,ex:free-in-P-no-CS-to-CS,thm:perc-in-red,thm:perc-in-red-2}).
Removal of mediator nodes works well for control of minimal trap spaces (\cref{thm:min-ts-control}),
but is not a guarantee for good behaviour in the general attractor case,
as clarified by many counterexamples.

\section{Definitions and background}\label{sec:defs}

We set $\B = \{0, 1\}$.
Boolean networks on $n$ \emph{components} (or variables) are maps from $\B^n$ to itself,
used to model, for instance, the qualitative behaviour of interacting biological species \citep{schwab2020concepts,puvsnik2022review}.
$\B^n$ is called the \emph{state space} of networks on $n$ components.
We write $[n] = \{1,\dots, n\}$ for brevity.
The neighbour state of a state $x \in \B^n$ in direction $i \in [n]$ is denoted by $\bar{x}^i$.
Given a set $I \subseteq [n]$ and a state $x \in \B^n$, $x_I \in \B^I$ denotes the projection
of $x$ on the components in $I$.
For a set $A \subseteq \B^n$, $A_I$ denotes the set of states $x_I$ with $x \in A$,
and $f(A)$ is the image of $A$ under $f$ ($f(A)=\{f(x) \mid x\in A\}$).
Given a subset $A$ of $\B^{n-1}$, we write $A^{\star}$ for the largest subset of $\B^n$
that satisfies $A^{\star}_{[n-1]}=A$ (that is, $A^{\star} = \{x \in \B^n | x_{[n-1]} \in A\}$).

Consider a subset of $I$ of $[n]$ and a map $c \colon I \to \{0,1\}$.
The subset of $\B^n$ consisting of all states $x$ such that $x_i = c(i)$ for all $i \in I$
is called a \emph{subspace} of $\B^n$.
Variables in $I$ are said to be \emph{fixed} in the subspace,
while the other components are \emph{free}.
It is convenient to represent a subspace as an element of $\subs^n = \{0,1,{\star}\}^n$,
where ${\star}$ indicates that a component is free.
For example, the subspace $S = {\star} 01 \in \subs^3$ is the set $\{001, 101\}$, the first component is free ($S_1={\star}$),
and the second and third are fixed ($S_2=0$, $S_3=1$).
Note that, if $S \subseteq \B^{n-1}$ is a subspace, then $S^{\star}$ is also a subspace.

Dependencies between components as defined by their associated Boolean functions are captured by the
so-called \emph{interaction} or \emph{influence graph}.
This is a directed signed graph with set of nodes being the components $[n]$,
and admitting an edge from node $i$ to node $j$ of sign $s \in \{-1, 1\}$ if,
for some state $x \in \B^n$, $f_j(x) \neq f_j(\bar{x}^j)$, and $s = (f_j(\bar{x}^i) - f_j(x)) (\bar{x}^i_i - x_i)$.
In this case we say that $j$ is regulated by $i$. In case of $j=i$, $j$ is said to be \emph{autoregulated}.

In the following, the examples of Boolean networks are specified with propositional logic, using $\vee$ for \texttt{or}, while the symbol for \texttt{and} is omitted.

\subsection{Update schemes}

\begin{figure}
\begin{subfigure}{0.5\textwidth}
  \begin{minipage}{0.5\textwidth}
  \begin{equation*}
    \begin{tikzcd}[column sep=tiny]
      \circled{1} \arrow[loop left] \arrow[rr,bend right=20] & & \circled{2} \arrow[ll,-|,bend right=20] \arrow[dl,yshift=2pt,xshift=+2pt,bend right=20] \\
      & \circled{3} \arrow[ul,bend left=20] \arrow[ur,-|,yshift=-2pt,xshift=1pt,bend right=20]
    \end{tikzcd}
  \end{equation*}
  \end{minipage}%
  \begin{minipage}{4.7cm}
  \fbox{
  \resizebox{3.8cm}{!}{
  \begin{tikzpicture}
  \node[draw=lightgray] (000) at (  0,  0){000};
  \node[draw=lightgray] (100) at (\vd,  0){100};
  \node (010) at (  0,\vd){010};
  \node (110) at (\vd,\vd){110};

  \node (001) at (-\td,-\td){001};
  \node (101) at (\vd+\td,-\td){101};
  \node[draw=lightgray] (011) at ( -\td,\vd+\td){011};
  \node[draw=lightgray] (111) at (\vd+\td,\vd+\td){111};

  \path[->,draw,black]
  (001) edge (000)
  (010) edge (000)
  (010) edge (011)
  (011) edge (001)
  (100) edge (110)
  (101) edge (100)
  (110) edge (111)
  (110) edge (010)
  (111) edge (101)
  (010) edge[dashed] (001)
  (110) edge[dashed] (011)
  ;
  \end{tikzpicture}
  }
  }
  \end{minipage}%
  \caption{$f(x_1, x_2, x_3) = (x_1 \bar{x}_2 {\vee} x_1 x_3, x_1 \bar{x}_3, x_2)$}\label{fig:ex-all-a}
\end{subfigure}
\begin{subfigure}{0.5\textwidth}
  \begin{minipage}{0.6\textwidth}
  \begin{equation*}
    \begin{tikzcd}[column sep=small]
      \circled{1} \arrow[loop left] \arrow[r] & \circled{2} \arrow[loop right,-|]
    \end{tikzcd}
  \end{equation*}
  \end{minipage}%
  \begin{minipage}{0.5\textwidth}
  \fbox{
  \begin{tikzpicture}
  \node (00) at (0,0){00};
  \node (10) at (\vd,0){10};
  \node (01) at (0,\vd){01};
  \node (11) at (\vd,\vd){11};
  \path[->,draw,black]
  (01) edge (00)
  (10) edge[transform canvas={xshift=+1.6pt}] (11)
  (11) edge[transform canvas={xshift=-1.6pt}] (10)
  ;
  \end{tikzpicture}
  }
  \end{minipage}%
  \caption{$\rd{f}(x_1,x_2) = (x_1, x_1 \bar{x}_2)$}\label{fig:ex-all-b}
\end{subfigure}
\caption{(a) Interaction graph and state transition graphs of a Boolean network in 3 components.
States in boxes are representative states w.r.t. the component $n = 3$,
which is not autoregulated.
(b) Interaction graph and state transition graphs of the Boolean network obtained from the
network in (a) by elimination of component 3.
$\AD(\rd{f})$, $\SD(\rd{f})$ and $\GAD(\rd{f})$ coincide.
The transitions $110 \to 010$ and $110 \to 011$ are lost in the reduction.}\label{fig:ex-all}
\end{figure}
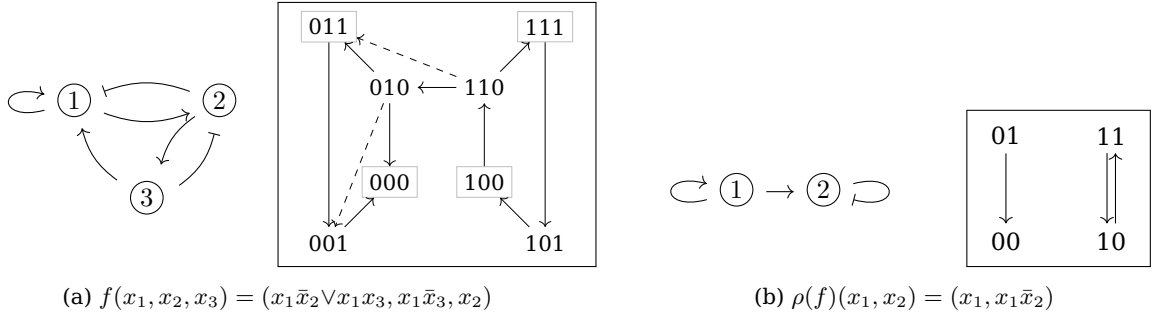

We define dynamics of a Boolean network $f$ on $n$ components as directed graphs with set of nodes
being the state space $\B^{n}$.
The edges, called \emph{transitions}, are defined depending on the update scheme as follows.

\begin{itemize}
  \item In the synchronous dynamics ($\SD(f)$) each state that is not fixed has exactly one successor, defined by its image under $f$,
        that is, the set of transitions is given by $\{ x\to y \mid x\neq y, y = f(x) \}$.
  \item In the asynchronous dynamics ($\AD(f)$) only transitions that involve the update of one component are considered:
        the set of transitions is $\{ x\to y\mid \exists i\in[n]: y=\bar x^i, y_i=f_i(x) \}$.
  \item The general asynchronous dynamics ($\GAD(f)$) allows for the update of any combination of possible components,
        and has therefore edge set $\{ x\to y\mid x\neq y,\forall i\in[n]: y_i\neq x_i\Rightarrow y_i=f_i(x) \}$.
\end{itemize}

Observe that all transitions in $\AD(f)$ and $\SD(f)$ are transitions in $\GAD(f)$.

Other definitions of dynamics are possible.
For instance, the most permissive dynamics cointains all transitions that are achievable
in a multivalued refinement of $f$ \citep{pauleve2020reconciling},
and contains in particular all transitions that are in $\GAD(f)$.
Although we do not consider the most permissive semantics explicitly here,
the results about control of minimal trap spaces have a bearing on most permissive dynamics,
because minimal trap spaces and attractors coincide in this case.

In the examples, we draw the transitions in asynchronous dynamics as normal arrows,
while the transitions found in synchronous dynamics are dashed (if not drawn as asynchronous),
and transitions in general asynchronous are dotted (if not drawn as asynchronous or synchronous).

\begin{example}
  \cref{fig:ex-all} (a) displays the interaction graph and the synchronous, asynchronous and general asynchronous
  state transition graphs of a Boolean network in 3 components.
  For instance, the state 100 has one successor (110) in all three dynamics,
  whereas the state 110 has one successor (011) in the synchronous,
  two successors (010 and 111) in the asynchronous, and three successors in the general asynchronous dynamics.
\end{example}

\subsection{Trap sets, trap spaces, attractors}

Given a state transition graph, a \emph{trap set} is a subset of the state space
that admits no outgoing transitions.

A trap set that is minimal with respect to inclusion is called an \emph{attractor}.
Attractors that consist of a single state are called \emph{fixed points} or \emph{steady states}.
Other attractors are called \emph{cyclic} or \emph{complex}.

A subspace that is also a trap set is called a \emph{trap space},
In other words, a subspace $T \in \subs^n$ is a trap space if $f(T) \subseteq T$,
that is, if $f_i(T) = T_i$ for all $i \in [n]$ such that $T_i \in \{0,1\}$.
A trap space $T$ is minimal if, given $T'$ trap space, $T' \subseteq T$
implies $T' = T$.

Fixed points and trap spaces are independent of the update scheme.
Minimal trap spaces are objects of particular interest.
By definition, each minimal trap space contains at least one attractor.
On the other hand, empirical studies of Boolean models of biological networks
found that minimal trap spaces are generally in one-to-one correspondence with attractors
of asynchronous dynamics \citep{klarner2015approximating}.
There are also classes of networks for which the one-to-one correspondence
between attractors and minimal trap spaces is guaranteed by structural properties
of the interaction graph \citep{naldi2023linear}.
Moreover, minimal trap spaces are exactly the attractors in most permissive dynamics.

\begin{example}
  The Boolean network in \cref{fig:ex-all-a} has four trap spaces:
  ${\star}{\star}{\star}$, $0{\star}{\star}$, $00{\star}$, $000$.
  There is only one minimal trap space, $000$, which is a fixed point,
  and there are no cyclic attractors in any dynamics.

  The network in \cref{fig:ex-all-b} has a fixed point ($00$) and a cyclic attractor ($\{10, 11\}$).
  They coincide with the minimal trap spaces.
\end{example}

\subsection{Reduction: elimination of components}\label{sec:reduction}

We recall the definition for elimination of non-autoregulated components \citep{naldi2009reduction,naldi2011dynamically,veliz2011reduction}.
For convenience and without loss of generality, we consider the elimination of the last component $n$.

Since $n$ is not autoregulated, for each $x \in \B^{n-1}$ the equality $f_n(x, 0) = f_n(x, 1)$ holds.
We call the state $(x, f_n(x,0))$ the \emph{representative state} of $\{(x, 0), (x, 1)\}$
for the elimination of component $n$.
It will also be convenient to refer to $(x, f_n(x,0))$ as the representative state of $x$.

The reduction $\rd{f} \colon \B^{n-1} \to \B^{n-1}$ of the Boolean network $f$
by elimination of component $n$ is then defined,
for each component $i \neq n$, as $f_i$ applied to the representative states:
for each $x \in \B^{n-1}$,
\begin{equation*}\label{eq:def-red}
  \rd{f}_i(x) = f_i(x, f_n(x, 0)) = f_i(x, f_n(x, 1)).
\end{equation*}

Equivalently, denoting $\sigma \colon \B^{n-1} \to \B^{n}$ the map that associates to each state $x$
the representative state of $x$, we can write
\begin{equation}\label{eq:def-red-2}
  \rd{f}_i = f_i \circ \sigma.
\end{equation}

\begin{figure}
\begin{subfigure}{0.32\textwidth}
  \begin{equation*}
    f =
    \begin{cases}
      f_1(x, x_n) \\
      f_2(x, x_n) \\
      \ \vdots \\
      f_{n-1}(x, x_n) \\
      f_n(x, 0) = f_n(x, 1)
    \end{cases}
  \end{equation*}

  $$\hspace{1cm} \big\downarrow \substack{\text{elimination} \\ \text{of } n}$$

  \begin{equation*}
    \rd{f} =
    \begin{cases}
      f_1(x, f_n(x, 0)) \\
      f_2(x, f_n(x, 0)) \\
      \ \vdots \\
      f_{n-1}(x, f_n(x, 0))
    \end{cases}
  \end{equation*}
\caption{}\label{fig:reduction-f}
\end{subfigure}
\begin{subfigure}{0.31\textwidth}
\centering
\begin{tikzpicture}
  \node (n) at (0,0){\circled{n}};
  \node (a) at (-1.3,1){\circled{\color{white}{a}}};
  \node (b) at (-1.2,0){\circled{\color{white}{a}}};
  \node (c) at (-1.3,-1){\circled{\color{white}{a}}};
  \node (d) at (1.3,0.6){\circled{\color{white}{a}}};
  \node (e) at (1.3,-0.6){\circled{\color{white}{a}}};
  \path[->,draw,black]
  (a) edge (n)
  (b) edge (n)
  (c) edge (n)
  (c) edge[loop left] (c)
  (n) edge (d)
  (n) edge (e)
  (d) edge[loop right] (d);
\end{tikzpicture}

\vspace{-0.2cm}

$$\hspace{1cm} \big\downarrow \substack{\text{elimination} \\ \text{of } n}$$

\vspace{0.3cm}

\begin{tikzpicture}
  \node (a) at (-1.3,1){\circled{\color{white}{a}}};
  \node (b) at (-1.2,0){\circled{\color{white}{a}}};
  \node (c) at (-1.3,-1){\circled{\color{white}{a}}};
  \node (d) at (1.3,0.6){\circled{\color{white}{a}}};
  \node (e) at (1.3,-0.6){\circled{\color{white}{a}}};
  \path (c) edge[loop left] (c);
  \path[->,draw,black,dashed]
  (a) edge (d)
  (b) edge (d)
  (c) edge (d)
  (a) edge (e)
  (b) edge (e)
  (c) edge (e)
  (d) edge[loop right] (d)
  ;
\end{tikzpicture}
\vspace{0.2cm}
\caption{}\label{fig:reduction-ig}
\end{subfigure}
\begin{subfigure}{0.4\textwidth}
  \begin{equation*}
  \resizebox{!}{1.5cm}{
  \begin{tikzcd}[row sep=small,column sep=small,ampersand replacement=\&]
    \& \& \& \overline{\sigma(x)}^j \\
    \& \& \sigma(x) \arrow[ur] \& \\
    \& \& \& (\bar{x}^j,x_n) \\
    (\bar{x}^i,x_n) \& \& (x,x_n) \arrow[uu,"n"'] \arrow[ll,"i"'] \arrow[ur,dash,dashed,gray,"j"] \&
  \end{tikzcd}}
  \end{equation*}

  $$\hspace{1cm} \big\downarrow \substack{\text{elimination} \\ \text{of } n}$$

  \vspace{0.8cm}

  \begin{equation*}
  \begin{tikzcd}[row sep=small,column sep=small]
    & & & \bar{x}^j \\
    \bar{x}^i & & x \arrow[ll,dash,dashed,gray,"i"'] \arrow[ur,"j"] &
  \end{tikzcd}
  \vspace{0.5cm}
  \end{equation*}
  \caption{}\label{fig:reduction-ad}
\end{subfigure}
  \caption{Schematics summarizing the idea behind elimination of a non-autoregulated component (component $n$ in the figure).
  (a) Effect on the update functions: all instances of $x_n$ are substituted with the update function $f_n$ of $n$.
  (b) Effect on the interaction graph: paths of length two that are mediated by $n$ become direct interactions
      or can disappear with the reduction.
  (c) Effect on the asynchronous dynamics: $\sigma(x)$ is the representative state of $(x, x_n)$.
      Only transitions that start from a representative state are guaranteed to be preserved.}\label{fig:reduction}
\end{figure}
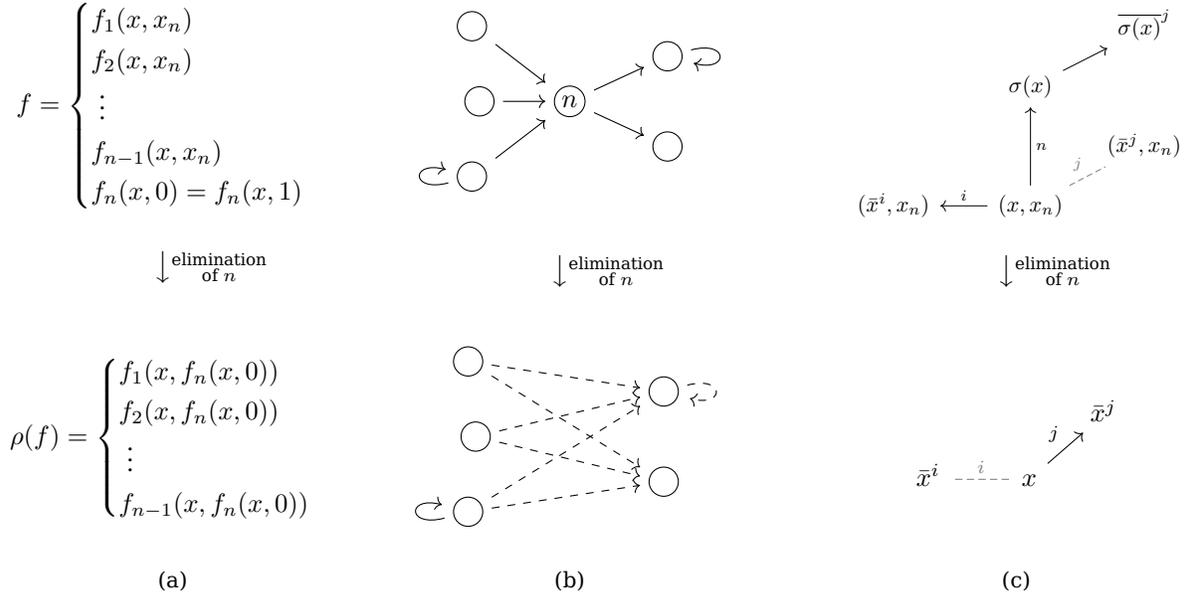

Intuitively, when the update function for component $n$ does not depend on $n$ itself,
one can replace all instances of $x_n$ in the update functions of other components with $f_n$,
obtaining a Boolean network that does not involve $n$ (\cref{fig:reduction-f}).
The relationships between the resulting dynamics and interaction graphs and the original
dynamics and interaction graphs have been studied in particular in \citep{naldi2009reduction,naldi2011dynamically,veliz2011reduction}.
In terms of regulatory structure, while interactions can disappear with the reduction (\cref{fig:reduction-ig})
the existence of a path of sign $s$ in the interaction graph of $\rd{f}$
implies the existence of a path of the same sign in the interaction graph of $f$.
Concerning the dynamics, one can easily derive the following:
\begin{itemize}
  \item[(1)] For all $x \in \B^n$, there is a transition from $\overline{\sigma(x)}^n$ to $\sigma(x)$
        in $\AD(f)$ and $\GAD(f)$ (but not necessarily in $\SD(f)$).
  \item[(2)] For $J \subseteq V \setminus \{n\}$ and $x \in \B^n$, for any dynamics $\D$,
        there exists a transition in $\D(\rd{f})$ from $x$ to $\bar{x}^J$
        if and only if there is a transition in $\D(f)$ from $\sigma(x)$ to $\overline{\sigma(x)}^J$.
  \item[(3)] As a consequence, $x \in \B^n$ is a fixed point for $\rd{f}$ if and only if
        $\sigma(x)$ is a fixed point for $f$, and there is a one-to-one mapping between
        the fixed points of $f$ and the fixed points of $\rd{f}$.
\end{itemize}
Looking at observation (2) we can state that a transition that starts at a non-representative state
is not represented in the reduced dynamics, unless a parallel transition exists that starts at its
corresponding representative state (\cref{fig:reduction-ad}).
Note how point (1) creates a difference between the synchronous dynamics and the other dynamics.
This distinction is at the source of many limitations that arise in the application of
elimination of components to synchronous dynamics.
We will later take a closer look at what happens to trap spaces (\cref{sec:min-ts}),
and discuss cyclic attractors (\cref{sec:attractors}).

\begin{example}
  In \cref{fig:ex-all-a}, the representative states for the elimination of component 3 are in boxes.
  For instance, since $f_3(110) = 1$, the representative state of $110$ and $111$ is the state $111$.
  The Boolean network resulting from the elimination is shown in \cref{fig:ex-all-b}.
  We can observe that the transition from $111$ to $101$ results in a transition from $11$ to $10$
  in the reduction ($111$ is a representative state), while the transitions
  from $110$ to $010$ or to $011$ are not preserved by the reduction, since no similar transitions
  exist with source the representative state $111$ of $110$.
\end{example}

\subsection{Control strategies}\label{sec:control}

In this work, a \emph{control strategy} to be applied on a Boolean network $f\colon\B^n \to \B^n$
is \emph{identified with a subspace} of $\B^n$.
Informally, the application of a control strategy consists in fixing the value
of a subset of the components.

The result of the application of control strategy $S$ to $f$
will be denoted by $C(f, S)$, and is defined as another Boolean network from $\B^n$ to itself.

For each component $i$, we set:
\begin{equation*}\label{def:control}
  C(f, S)_i = \begin{cases} f_i, & \textit{if $i$ is free in $S$},\\
                S_i, & \textit{otherwise.}
              \end{cases}
\end{equation*}

\begin{remark}\label{rmk:ig-subgraph}
  The interaction graph of $C(f, S)$ is a subgraph of the interaction graph of $f$.
\end{remark}

\begin{example}
  For the network in \cref{fig:order-a}, applying the control defined by $S={\star}1$ means
  changing the update function $f_2(x_1, x_2) = x_1$ to $C(f,S)_2(x_1,x_2) = 1$ (\cref{fig:order-b}).
\end{example}

One can observe that the elimination of a component and the application of a control strategy
commute, provided that the eliminated component is not fixed in the control strategy.
This is stated by the following proposition.

\begin{proposition}\label{prop:red-sub}
  Suppose that $n$ is free in $S$. Then $C(\rd{f}, S_{[n-1]}) = \rd{C(f, S)}$.
\end{proposition}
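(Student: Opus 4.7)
The plan is a direct coordinate-wise comparison of the two Boolean networks, which both live on $\B^{n-1}$, based on a case split on whether a component is free or fixed in $S$.

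First I would verify that the right-hand side is well-defined: the reduction $\rd{C(f,S)}$ requires that component $n$ is not autoregulated in $C(f,S)$. Since $n$ is free in $S$, we have $C(f,S)_n = f_n$, and $n$ is not autoregulated in $f$ by assumption, so the reduction is legitimate. This observation also tells us that the representative map $\sigma \colon \B^{n-1} \to \B^n$ sending $x$ to $(x, f_n(x,0))$ is the same for both $f$ and $C(f,S)$.

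Next I would unfold both definitions. Fix $x \in \B^{n-1}$ and $i \in [n-1]$. Using \cref{eq:def-red-2} applied to $C(f,S)$, one has
\begin{equation*}
  \rd{C(f,S)}_i(x) = C(f,S)_i(\sigma(x)).
\end{equation*}
On the other side,
\begin{equation*}
  C(\rd{f}, S_{[n-1]})_i(x) = \begin{cases} \rd{f}_i(x) = f_i(\sigma(x)) & \text{if $i$ is free in $S_{[n-1]}$,}\\ (S_{[n-1]})_i & \text{if $i$ is fixed in $S_{[n-1]}$.}\end{cases}
\end{equation*}
Since $i \in [n-1]$, being free (resp.\ fixed) in $S_{[n-1]}$ is equivalent to being free (resp.\ fixed) in $S$, and $(S_{[n-1]})_i = S_i$.

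Now I would close the argument by case analysis. If $i$ is free in $S$, then $C(f,S)_i = f_i$, so $\rd{C(f,S)}_i(x) = f_i(\sigma(x)) = \rd{f}_i(x) = C(\rd{f}, S_{[n-1]})_i(x)$. If $i$ is fixed in $S$, then $C(f,S)_i$ is the constant function with value $S_i$, so $\rd{C(f,S)}_i(x) = S_i = C(\rd{f}, S_{[n-1]})_i(x)$. Since this holds for every $i \in [n-1]$ and every $x$, the two networks coincide.

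There is no real obstacle here; the statement is essentially a bookkeeping check. The only point one has to be careful about is that the representative map used in both reductions is the same, which is exactly what the hypothesis that $n$ is free in $S$ guarantees by forcing $C(f,S)_n = f_n$.
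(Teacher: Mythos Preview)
Your proof is correct and follows essentially the same approach as the paper: a coordinate-wise comparison with a case split on whether $i$ is free or fixed in $S$, using that $C(f,S)_n = f_n$ (since $n$ is free in $S$) so that both reductions share the same representative map $\sigma$. Your version is slightly more explicit about the well-definedness of $\rd{C(f,S)}$ and the coincidence of the two representative maps, which is a welcome clarification but not a different idea.
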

\begin{proof}
  If $i$ is fixed in $S$, then both $C(\rd{f}, S_{[n-1]})_i$ and $\rd{C(f, S)}_i$ equal $S_i$.
  If $i$ is free in $S$, then its update function is not changed by the application of the  control strategy,
  thus $C(\rd{f}, S_{[n-1]})_i = \rd{f}_i$ and $C(f, S)_i = f_i$.
  Hence, $C(f, S)_n = f_n$, and therefore
  $\rd{C(f, S)}_i = \rd{f}_i = C(\rd{f}, S_{[n-1]})_i$.
\end{proof}

Now, consider the removal of a component that is fixed in $S$.
The application of the control $S$ to $f$ and the elimination of the component,
when performed in a different order, can result in a different Boolean network.

For example, the restriction of $f(x_1,x_2) = (x_1 {\vee} x_2, x_1)$ to $S = {\star} 1$
gives $C(f, S) = (x_1 {\vee} x_2, 1)$ and $\rd{C(f, S)}(x_1) = 1$,
whereas $\rd{f}(x_1) = x_1 = C(\rd{f}, S_{[1]}={\star})(x_1)$ (see \cref{fig:order}).

\begin{figure}
\begin{subfigure}{0.3\textwidth}
\centering
\fbox{
\begin{tikzpicture}
\filldraw[fill=black!20!white, draw=black, opacity=0.3] (-0.35*\vd,0.7*\vd) rectangle (1.3*\vd,1.3*\vd);
\node (00) at (0,0){00};
\node (10) at (\vd,0){10};
\node (01) at (0,\vd){01};
\node (11) at (\vd,\vd){11};
\path[->,draw,black]
(01) edge (00)
(01) edge (11)
(10) edge (11)
(01) edge[dashed] (10)
;
\node at (1.5*\vd,1.0*\vd){$S$};
\end{tikzpicture}}
\caption{$f(x_1,x_2) = (x_1 {\vee} x_2, x_1)$}\label{fig:order-a}
\end{subfigure}%
\begin{subfigure}{0.3\textwidth}
\centering
\fbox{
\begin{tikzpicture}
\filldraw[fill=black!20!white, draw=black, opacity=0.3] (-0.35*\vd,0.7*\vd) rectangle (1.3*\vd,1.3*\vd);
\node (00) at (0,0){00};
\node (10) at (\vd,0){10};
\node (01) at (0,\vd){01};
\node (11) at (\vd,\vd){11};
\path[->,draw,black]
(00) edge (01)
(01) edge (11)
(10) edge (11)
;
\node at (1.5*\vd,1.0*\vd){$S$};
\end{tikzpicture}}
\caption{$C(f, S)(x_1, x_2) = (x_1 {\vee} x_2, 1)$}\label{fig:order-b}
\end{subfigure}%
\begin{subfigure}{0.2\textwidth}
\centering
  \fbox{
  \begin{tikzpicture}
  \node (0) at (0,0){0};
  \node (1) at (\vd,0){1};
  \end{tikzpicture}}
\caption{$\rd{f}(x_1) = x_1$}\label{fig:order-c}
\end{subfigure}%
\begin{subfigure}{0.2\textwidth}
\centering
  \fbox{
  \begin{tikzpicture}
  \node (0) at (0,0){0};
  \node (1) at (\vd,0){1};
  \path[->,draw,black]
  (0) edge (1);
  \end{tikzpicture}}
\caption{$\rd{C(f, S)}(x_1)=1$}\label{fig:order-d}
\end{subfigure}
\caption{Example illustrating that, if $S_n \neq \star$, then $C(\rd{f},S_{[n-1]})$ and $\rd{C(f,S)}$ can differ.}\label{fig:order}
\end{figure}
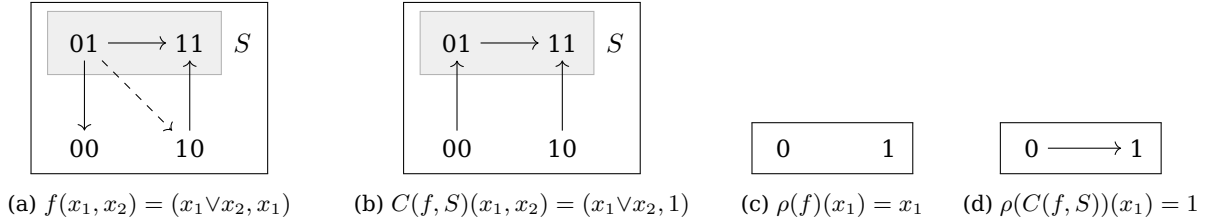

In light of this latter remark, \emph{we restrict the analysis of control strategy behaviour under
reduction to the removal of components that are free in the control strategy:}
\begin{equation}\label{eq:assumption-Sn-free}
  S_n = \star.\tag{A}
\end{equation}

\subsubsection{Phenotype control}

The objective of control is typically the redirection of the asymptotic behaviour
towards a phenotype of interest.
For the purpose of this work, a \emph{phenotype} is defined as a subspace,
i.e., it is identified by fixing some components to specific values.

We can think of components that are fixed in a phenotype as \emph{readouts}
of the model; phenotypes are often defined using output components
(components that are not regulators of any other component).
Control strategies instead work on components that can be modified, and
control often focuses on (but is not necessarily limited to) input components,
meaning components that are not the target of any regulator.
Since components that are fixed in phenotypes or in control strategies fulfill two opposite roles,
it is reasonable to \emph{limit the control strategies under consideration
to subspaces $S$ that do not fix any component that is fixed in the target phenotype $P$}:
\begin{equation}\label{eq:assumption-Pi-Si}
   i \in [n], \ P_i \in \{0,1\} \ \Rightarrow \ S_i = \star.\tag{B}
\end{equation}
Assumption \ref{eq:assumption-Pi-Si} gives a restriction on the possible control strategies that can be investigated for
a given phenotype, adding to assumption \ref{eq:assumption-Sn-free}, which imposes
that components being eliminated must be free in the control strategy.
Note that we do not make any restriction on $P_n$, that is, we do not assume that
the eliminated component is free in the phenotype. In fact, we will analyse the
two cases separately ($n$ free in $P$ and $n$ fixed in $P$).
In both of these cases, as per assumption \ref{eq:assumption-Sn-free},
$n$ is not allowed to be targeted by the control strategy.

We distinguish between three types of phenotype control here (see \cref{fig:control}).
The first looks at ensuring that all attractors are in the desired phenotype,
and depends on the dynamics of interest.

Fix a Boolean network $f$ on $n$ variables and a phenotype $P \in \subs^n$.

\begin{definition}(\emph{Phenotype control for attractors})
  A subspace $S$ is an attractor-control strategy for $(f, P)$
  under dynamics $\dyn$ if all the attractors of the dynamics $\dyn(C(f,S))$ are contained in $P$.
\end{definition}

A different approach focuses on controlling minimal trap spaces only,
and is therefore independent of the dynamics.

\begin{definition}(\emph{Phenotype control for minimal trap spaces})
  A subspace $S$ is an MTS-control strategy for $(f, P)$
  if all the minimal trap spaces of $C(f, S)$ are contained in $P$.
\end{definition}

Control of minimal trap spaces is neither strictly stronger nor strictly weaker
than attractor control, as illustrated by the following examples.
In figures, the gray boxes cover states belonging to the target phenotype.

\begin{example}\label{ex:attrs-no-mts}
  (Attractor-control strategy that is not an MTS-control strategy)
  Consider the asynchronous dynamics for the Boolean network
  $f(x_1,x_2,x_3) = (x_2 \bar{x}_3, x_3 \bar{x}_2, x_2 {\vee} \bar{x}_3)$ (\cref{fig:ex:attrs-no-mts-a}).
  Take $P = 0{\star}{\star}$. Since the unique attractor of $\AD(f)$ ($\{000, 001, 011\}$) is contained in $P$,
  $S = {\star}{\star}{\star}$ is an attractor-control strategy for $(f, P)$.
  However, $f$ admits only one trap space, the full state space.
  Hence, $S$ is not an MTS-control strategy for $(f, P)$.
  Similarly,
  $S= {\star}{\star}{\star}$ is an attractor-control strategy for the synchronous dynamics of
  $f(x_1, x_2, x_3) = (x_2 x_3, x_3, \bar x_3)$, with $P = 0{\star}{\star}$ (\cref{fig:ex:attrs-no-mts-b}),
  $S= {\star}{\star}{\star}{\star}$ is an attractor-control strategy for the general asynchronous dynamics of
  $f(x_1, x_2, x_3, x_4) = (x_2 x_3 x_4, x_4(x_2 {\vee} \bar x_1 \bar x_3), \bar x_1(x_2 x_3 {\vee} \bar x_2 \bar x_4), x_3 \bar x_1)$
  and the phenotype $P=0{\star}{\star}{\star}$ (graph not shown).

  \begin{figure}
  \begin{subfigure}{0.33\linewidth}
    \centering
    \resizebox{\linewidth}{!}{
    \fbox{
    \begin{tikzpicture}
    \filldraw[fill=black!60!white, draw=black, opacity=0.3] (-1.2*\vd,-\vd) rectangle (0.4*\vd,2*\vd);
    \node (000) at (  0,  0){000};
    \node (100) at (\vd,  0){100};
    \node (010) at (  0,\vd){010};
    \node (110) at (\vd,\vd){110};

    \node (001) at (-\td,-\td){001};
    \node (101) at (\vd+\td,-\td){101};
    \node (011) at ( -\td,\vd+\td){011};
    \node (111) at (\vd+\td,\vd+\td){111};

    \path[->,draw,black]
    (000) edge[transform canvas={xshift=+4pt}] (001)
    (001) edge (000)
    (001) edge (011)
    (010) edge (000)
    (010) edge (110)
    (010) edge (011)
    (011) edge[transform canvas={xshift=+3pt}] (001)
    (100) edge (000)
    (100) edge[transform canvas={xshift=+4pt}] (101)
    (101) edge (100)
    (101) edge (001)
    (101) edge[transform canvas={xshift=+3pt}] (111)
    (110) edge (100)
    (110) edge (111)
    (111) edge (101)
    (111) edge (011)
    ;
    \end{tikzpicture}
    }
    }
    \caption{$(x_2 \bar{x}_3, x_3 \bar{x}_2, x_2 {\vee} \bar{x}_3)$}\label{fig:ex:attrs-no-mts-a}
  \end{subfigure}
  \begin{subfigure}{0.33\linewidth}
    \centering
    \resizebox{\linewidth}{!}{
    \fbox{
    \begin{tikzpicture}
    \filldraw[fill=black!60!white, draw=black, opacity=0.3] (-1.2*\vd,-\vd) rectangle (0.4*\vd,2*\vd);
    \node (000) at (  0,  0){000};
    \node (100) at (\vd,  0){100};
    \node (010) at (  0,\vd){010};
    \node (110) at (\vd,\vd){110};

    \node (001) at (-\td,-\td){001};
    \node (101) at (\vd+\td,-\td){101};
    \node (011) at ( -\td,\vd+\td){011};
    \node (111) at (\vd+\td,\vd+\td){111};

    \path[->,draw,black]
    (000) edge[dashed] (001)
    (001) edge[dashed,transform canvas={xshift=-2pt}] (010)
    (010) edge[dashed,transform canvas={xshift=2pt}] (001)
    (011) edge[dashed] (110)
    (110) edge[dashed,bend left=15] (001)
    (100) edge[dashed] (001)
    (101) edge[dashed] (010)
    (111) edge[dashed] (110)
    ;
    \end{tikzpicture}
    }
    }
    \caption{$(x_2 x_3, x_3, \bar x_3)$}\label{fig:ex:attrs-no-mts-b}
  \end{subfigure}%
  \begin{subfigure}{0.33\linewidth}
  \resizebox{\linewidth}{!}{
  \fbox{
  \begin{tikzpicture}
  \filldraw[fill=black!60!white, draw=black, opacity=0.3] (-1.2*\vd,-\vd) rectangle (0.4*\vd,2*\vd);
  \node (000) at (  0,  0){000};
  \node (100) at (\vd,  0){100};
  \node (010) at (  0,\vd){010};
  \node (110) at (\vd,\vd){110};

  \node (001) at (-\td,-\td){001};
  \node (101) at (\vd+\td,-\td){101};
  \node (011) at ( -\td,\vd+\td){011};
  \node (111) at (\vd+\td,\vd+\td){111};

  \path[->,draw,black]
  (001) edge (011)
  (010) edge (110)
  (011) edge (010)
  (100) edge (101)
  (101) edge (001)
  (110) edge (100)
  (111) edge (101)
  (111) edge (110)
  (111) edge (011)
  (111) edge[dashed,bend left=10] (000)
  (111) edge[dotted,bend left=10] (100)
  (111) edge[dotted,bend left=15] (001)
  (111) edge[dotted,bend right=10] (010)
  ;
  \end{tikzpicture}
  }
  }
    \caption{$((x_1{\vee}x_2) \bar{x}_3, \bar{x}_1 (x_2{\vee}x_3), \bar{x}_2(x_1{\vee}x_3))$}\label{fig:ex:mts-no-attrs}
  \end{subfigure}
  \caption{$S = \B^3$ is an attractor-control strategy for $P=0{\star}{\star}$ for an asynchronous dynamics (case (a)),
  for a synchronous dynamics (case (b)). On the other hand, $S$ is not an MTS-control strategy.
  (c): $S = \B^3$ is an MTS-control strategy for $P=0{\star}{\star}$, since the unique minimal trap space is the fixed point $000$.
  $S$ is not an attractor-control strategy in any of the three dynamics, given the existence of the attractor $\{001, 010, 011, 100, 101, 110\}$.}\label{fig:ex:attrs-no-mts}
  \end{figure}
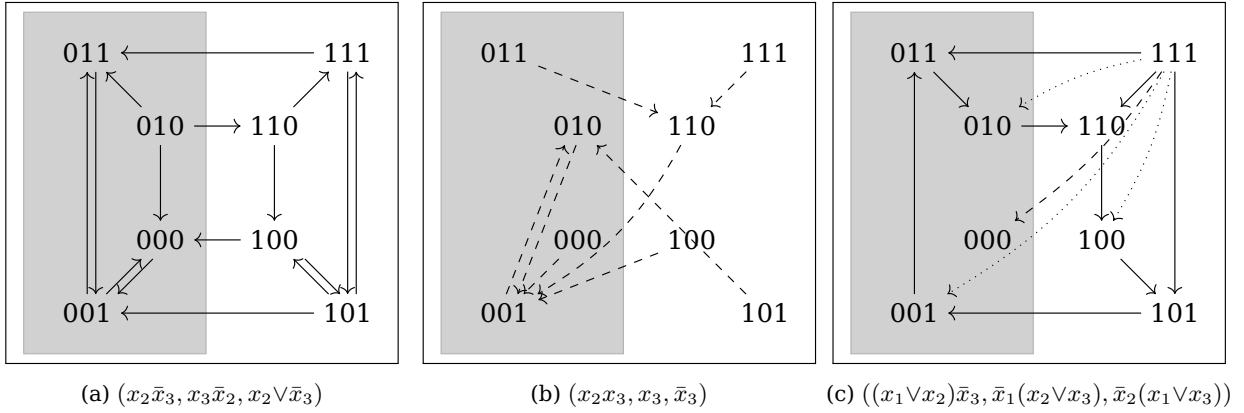
\end{example}

\begin{example}\label{ex:mts-no-attrs}
  Since attractors can exist outside of minimal trap spaces,
  MTS-control strategies are not necessarily attractor-control strategies.
  An example of such situation is given in \cref{fig:ex:mts-no-attrs}.
\end{example}

It should be noted that there are situations where MTS-control strategies are guaranteed to be also attractor-control strategies.
This is the case for instance for asynchronous dynamics of networks that admit a \emph{linear cut} \citep{naldi2023linear},
for which all asynchronous attractors are contained in minimal trap spaces.
Control of minimal trap spaces also translates to attractor control for most permissive dynamics \citep{pauleve2020reconciling}.

To discuss a third phenotype control scenario, we need an additional definition.

We call \emph{propagation} (or \emph{percolation}) function for $f$
the map $\Phi_f \colon \subs^n \to \subs^n$ that associates to each subspace $S \in \subs^n$
the minimal subspace, under inclusion, that contains $f(S)$.

Note that, if $S \in \subs^n$ is a trap space, $\Phi_f(S)$ is also a trap space, and $f(S) \subseteq S$.
Therefore, in this case there exists $k \leq n$ such that $\Phi_f^k(S) = \Phi_f^{k+i}(S)$ for all $i \in \mathbb{N}$.
We write $\phi(f) = \Phi_f^n(\B^n)$ for simplicity.

\begin{definition}(\emph{Phenotype control by value propagation})
  A subspace $S$ is a control strategy \emph{by (value) propagation} for $(f, P)$
  if $\phi(C(f, S))$ is contained in $P$.
\end{definition}

$S$ is a control strategy by propagation if fixing the components as defined by $S$
induces other components to get fixed under $f$ and so forth, until all the components fixed in the phenotype $P$
are fixed to their value in $P$.
Clearly all minimal trap spaces and all attractors of $f$, in any dynamics,
are contained in $\phi(f)$.
As a consequence, a control strategy by propagation is an attractor-control strategy in any dynamics,
and an MTS-control strategy. The converse is not true.

\begin{example}\label{ex:attrs-prop}
  The control strategies given in \cref{ex:attrs-no-mts} are attractor-control strategies
  but not control strategies by value propagation.
  For the example in \cref{fig:ex:mts-no-attrs}, ${\star}{\star}{\star}$ is an MTS-control strategy
  and not a control strategy by value propagation.
  For the Boolean network in \cref{fig:ex-all} (a),
  the full space $S = {\star}{\star}{\star}$ is an attractor-control strategy under all dynamics
  and an MTS-control strategy for $(f, P)$ with $P = 0{\star}{\star}$, but not a control strategy by value propagation.
\end{example}

Control strategies by value propagation have the desirable property of
working independently of the dynamics considered, as happens for MTS-control strategies.
Control strategies by value propagation can be thought of as particularly ``robust'' since they
allow control of all attractors in all updates.

\begin{figure}
  \centering
  \newcommand{\bdist}{5cm}
  \begin{tikzpicture}
    \node[rectangle,draw,minimum width=4,minimum height=2,line width=0.3mm] (prop) at (0,-0.2*\bdist) {\begin{tabular}{c}control by\\value propagation\end{tabular}};
    \node[rectangle,draw,minimum width=4,minimum height=2,line width=0.3mm] (attr) at (\bdist,0) {\begin{tabular}{c}attractor\\control\end{tabular}};
    \node[rectangle,draw,minimum width=4,minimum height=2,line width=0.3mm] (mints) at (\bdist,-0.4*\bdist) {\begin{tabular}{c}control of\\minimal trap spaces\end{tabular}};

    \draw[-{Implies},double,thick] ([shift={(0.1cm,+0.2cm)}]prop.east) -- ([shift={(-0.1cm,+0.2cm)}]attr.west);
    \draw[notImplies,double,thick,lightgray] ([shift={(-0.1cm,0.0cm)}]attr.west) -- ([shift={(0.1cm,0.0cm)}]prop.east) node[midway,below,shift={(0cm,-0.1cm)}]{\scriptsize ex.~\ref{ex:attrs-prop}};
    \draw[-{Implies},double,thick] ([shift={(0.1cm,-0.1cm)}]prop.east) -- ([shift={(-0.1cm,-0.1cm)}]mints.west);
    \draw[notImplies,double,thick,lightgray] ([shift={(-0.1cm,-0.3cm)}]mints.west) -- ([shift={(0.1cm,-0.3cm)}]prop.east) node[midway,left,shift={(0cm,-0.1cm)}]{\scriptsize ex.~\ref{ex:attrs-prop}};
    \draw[notImplies,double,thick,lightgray] ([shift={(-0.2cm,-0.1cm)}]attr.south) -- ([shift={(-0.2cm,+0.1cm)}]mints.north) node[midway,left,shift={(0cm,-0.1cm)}]{\scriptsize ex.~\ref{ex:attrs-no-mts}};
    \draw[notImplies,double,thick,lightgray] ([shift={(+0.2cm,0.1cm)}]mints.north) -- ([shift={(+0.2cm,-0.1cm)}]attr.south) node[midway,right,shift={(0cm,-0.1cm)}]{\scriptsize ex.~\ref{ex:mts-no-attrs}};
  \end{tikzpicture}

  \caption{Relationship between the three different approaches to phenotype control studied in this
  paper. The black double-lined arrows indicate total inclusion of phenotype control: any control by
  value propagation is an attractor-control and MTS-control strategy.
  Gray double-lined arrows with a slash indicate that the relationship is not always true.
  A reference to a counterexample is provided.}\label{fig:control}
\end{figure}
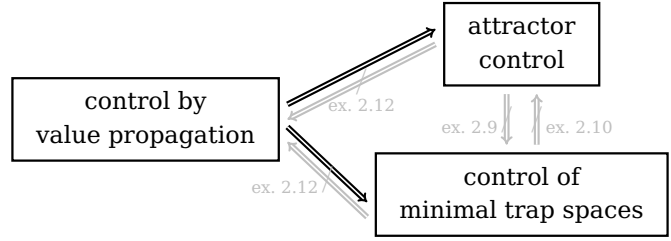

\section{Consequences of reduction on asymptotic dynamics}\label{sec:consequences}

It is well known that elimination of components affects the asymptotic dynamics of Boolean networks.
The map described in \cref{fig:ex-all} shows an example of reduction having an impact on the number of
minimal trap spaces and the number of attractors in all update modes.
In this section we first consider the effect of component elimination on minimal trap spaces,
and identify a structural condition for their preservation:
elimination of \emph{mediator} components, i.e., components having
regulators distinct from the regulators of their targets.
Then we clarify some differences and commonalities on the effects of reduction
on the asymptotic behaviour under different update schemes.

\subsection{Minimal trap spaces}\label{sec:min-ts}

We first list some general observations about trap spaces and elimination of components.

\begin{proposition}\label{prop:trap-spaces}
  Suppose that $T \in \subs^n$ is a trap space for $f$. Then:
  \begin{itemize}
    \item[(i)] $T_{[n-1]}$ is a trap space for $\rd{f}$.
    \item[(ii)] if $T$ is a minimal trap space and $T_n \in \{0, 1\}$, then $T_{[n-1]}$ is a minimal trap space.
    \item[(iii)] if $T$ is a minimal trap space and $T_i \in \{0, 1\}$ for all targets $i$ of $n$,
      then $T_{[n-1]}$ is a minimal trap space.
  \end{itemize}
\end{proposition}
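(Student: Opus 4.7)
The plan is to handle (i) by a direct computation based on the factorisation $\rd{f}_i = f_i \circ \sigma$ from \eqref{eq:def-red-2}, and then to handle (ii) and (iii) together by a single lifting argument: starting from a putative trap space $T' \subseteq T_{[n-1]}$ of $\rd{f}$ that is strictly smaller than $T_{[n-1]}$, build a trap space $\hat{T}$ of $f$ contained in $T$, and invoke the minimality of $T$ to force $T' = T_{[n-1]}$.

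For (i), the core observation I will establish first is that $\sigma(T_{[n-1]}) \subseteq T$. On the first $n-1$ components this is immediate, so the only thing to check is that $f_n(x,0) \in T_n$ for every $x \in T_{[n-1]}$. The case $T_n = \star$ is trivial; when $T_n \in \{0,1\}$, the state $(x, T_n)$ lies in $T$, so $f_n(x, T_n) = T_n$ because $T$ is a trap space, and by non-autoregulation of $n$ this equals $f_n(x, 0)$. Using $\sigma(T_{[n-1]}) \subseteq T$, for any $i \in [n-1]$ with $T_i$ fixed we get $\rd{f}_i(x) = f_i(\sigma(x)) = T_i$, which proves (i).

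For (ii) and (iii), given a trap space $T' \subseteq T_{[n-1]}$ of $\rd{f}$, I will define $\hat{T} \in \subs^n$ by $\hat{T}_i = T'_i$ for $i \in [n-1]$ and $\hat{T}_n = T_n$. This lift manifestly satisfies $\hat{T} \subseteq T$, so once it is shown to be a trap space of $f$, minimality of $T$ gives $\hat{T} = T$ and therefore $T' = T_{[n-1]}$. Under the hypothesis of (ii), i.e. $T_n \in \{0,1\}$, every $y \in \hat{T}$ satisfies $y_n = T_n = f_n(y_{[n-1]}, 0)$ by the argument used in (i), so $\sigma(y_{[n-1]}) = y$ and hence $f_i(y) = \rd{f}_i(y_{[n-1]}) \in T'_i$ for each fixed $i \in [n-1]$; the $n$-th coordinate is handled by $\hat{T} \subseteq T$ and $f_n(T) = \{T_n\}$.

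The hypothesis of (iii) is needed precisely when $T_n = \star$, in which case $\hat{T}_n = \star$ imposes no condition on $f_n$, and the verification that $\hat{T}$ is a trap space reduces to showing $f_i(y) \in T'_i$ for each fixed $i \in [n-1]$ and each $y \in \hat{T}$. I will split on whether $i$ is a target of $n$: if not, $f_i$ does not depend on its $n$-th argument, so $f_i(y) = \rd{f}_i(y_{[n-1]}) \in T'_i$; if $i$ is a target of $n$, the hypothesis $T_i \in \{0,1\}$ forces $T'_i = T_i$ (since $T' \subseteq T_{[n-1]}$), and then $f_i(y) \in T_i$ follows from $y \in \hat{T} \subseteq T$ and $T$ being a trap space. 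The main subtlety — and the reason (iii) needs its extra hypothesis — is exactly this last case, where targets of $n$ that are free in $T$ would escape the control provided by the projected trap space $T'$, so one needs $T$ itself to pin them down.
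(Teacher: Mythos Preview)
Your proof is correct and follows essentially the same approach as the paper. Part (i) is identical, and for (ii)--(iii) your explicit lift $\hat T$ and the paper's coordinate-by-coordinate verification that $T_i = T'_i$ for each $i$ fixed in $T'$ are two packagings of the same computation; if anything, your version makes the use of minimality of $T$ more transparent.
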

\begin{proof}
  (i) For all $x \in T_{[n-1]}$ and for all $i \neq n$, if $T_i$ is in $\{0,1\}$,
  then by \cref{eq:def-red-2}, since $\sigma(x)$ is in $T$, we have $\rd{f}_i(x) = f_i(\sigma(x)) = T_i$.

   \medskip

   (ii) $T_{[n-1]}$ is a trap space by the first point.
   Suppose that $T' \subseteq T_{[n-1]}$ is a trap space.
   Take $i \neq n$ such that $T'_i$ is in $\{0,1\}$, we want to show that $T_i = T'_i$.
   For any state $x \in T$ we have $f_i(x) = f(x_{[n-1]}, T_n) = f(x_{[n-1]}, f_n(x)) = \rd{f}_i(x_{[n-1]}) = T'_i$.

   \medskip

   (iii)
   Suppose that $T' \subseteq T_{[n-1]}$ is a trap space.
   Take $i \neq n$ such that $T'_i$ is in $\{0,1\}$, we want to show that $T_i = T'_i$.
   For any state $x \in T$:
   \begin{itemize}
     \item if $i$ is not a target of $n$, then $f_i(x) = f_i(\bar{x}^n)$, therefore $f_i(x) = f_i(\sigma(x_{[n-1]})) = \rd{f}_i(x_{[n-1]}) = T'_i$;
     \item if $i$ is a target of $n$, then, since $T_i$ is in $\{0,1\}$ and representative states of states in $T$ are in $T$, we have $T_i = f_i(x) = f_i(\sigma(x_{[n-1]})) = \rd{f}_i(x_{[n-1]}) = T'_i$.
   \end{itemize}
\end{proof}

For each minimal trap space $T$ of $f$, the reduction $\rd{f}$ admits at least one minimal trap space
contained in $T_{[n-1]}$. The reduced network can also admit ``new'' trap spaces outside of
projections of minimal trap spaces of $f$.
We introduce some terminology to relate the set of minimal trap spaces of a network
to the set of minimal trap spaces of its reduction.

\begin{definition}\label{def:pres-min-ts}
  The minimal trap spaces of $f$ are \emph{strictly preserved} by the reduction if,
  for each minimal trap space $T'$ of $\rd{f}$ there exists a minimal trap space $T$ of $f$
  such that $T' = T_{[n-1]}$.
\end{definition}

The form of preservation introduced by the definition is rather strong.
If the minimal trap spaces are strictly preserved by the reduction, it is easy to see that,
given $T$ minimal trap space for $f$, $T_{[n-1]}$ is a minimal trap space for $\rd{f}$.
Therefore the minimal trap spaces of $f$ and $\rd{f}$ are in one-to-one correspondence.

The following result gives a sufficient condition for the preservation of minimal trap spaces.

\begin{theorem}\label{thm:min-ts}
  Suppose that no regulator of $n$ regulates a target of $n$.
  Then the minimal trap spaces of $f$ are strictly preserved by the elimination of $n$.
\end{theorem}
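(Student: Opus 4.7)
The plan is to use the structural hypothesis to lift any minimal trap space $T'$ of $\rd{f}$ to a trap space $\hat{T}$ of $f$ with $\hat{T}_{[n-1]} = T'$; then \cref{prop:trap-spaces}(i) together with the minimality of $T'$ will force every minimal trap space of $f$ contained in $\hat{T}$ to project to $T'$, and since the poset of trap spaces is finite, at least one such minimal trap space of $f$ exists.

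Let $R$ denote the regulators of $n$ and $N$ the targets of $n$. Because $n$ is not autoregulated, $f_n$ depends only on coordinates in $R$; by the hypothesis, for each $i \in N$, $f_i$ does not depend on any coordinate in $R$. I split on whether $f_n$ is constant on $T'^{\star}$. If $f_n \equiv v$ on $T'^{\star}$, set $\hat{T} = \sigma(T') = T'^{\star} \cap \{x_n = v\}$: the $n$-coordinate is preserved by construction, and for $i \in [n-1]$ fixed in $T'$, $f_i(\hat{T}) = f_i(\sigma(T')) = \rd{f}_i(T') = \{T'_i\}$ because $T'$ is a trap space of $\rd{f}$. Here the structural hypothesis is not used.

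If instead $f_n$ takes both values on $T'^{\star}$, I set $\hat{T} = T'^{\star}$ and verify the trap-space condition coordinate by coordinate. The $n$-coordinate is free, so there is nothing to check there. For $i \in [n-1]$ fixed in $T'$ with $i \notin N$, $f_i$ is independent of $x_n$ and the argument from the first case applies verbatim. For $i \in N$ fixed in $T'$, the hypothesis lets me write $f_i(x) = g_i(x_n, x_{[n-1] \setminus R})$ for some $g_i$. Since $T' = T'_R \times T'_{[n-1]\setminus R}$ as subspaces, for every $y \in T'_{[n-1]\setminus R}$ I can pick $x_R, x'_R \in T'_R$ such that the representative state of $(y, x_R)$ has $n$-coordinate $0$ and that of $(y, x'_R)$ has $n$-coordinate $1$; evaluating $\rd{f}_i$ at these two points of $T'$ yields $g_i(0, y) = g_i(1, y) = T'_i$, so $f_i$ is constantly $T'_i$ on all of $T'^{\star}$.

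The main obstacle is precisely this last sub-argument in the non-constant case: one has to combine both halves of the structural assumption (that $f_n$'s regulators are disjoint from the regulators of each target $i$ of $n$) with the fact that $T'$ factorizes coordinatewise as a subspace, in order to guarantee that each target $i$ fixed in $T'$ actually receives, through representative states, both possible values of $x_n$. Once this constancy is established the trap-space check is routine, and the passage from $\hat{T}$ to a minimal trap space of $f$ projecting to $T'$ follows immediately from \cref{prop:trap-spaces}(i) and finiteness.
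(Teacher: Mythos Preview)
Your proof is correct and takes a genuinely different route from the paper's. The paper argues by contradiction: given a minimal trap space $T'$ of $\rd{f}$, it takes the smallest trap space $T$ of $f$ with $T' \subseteq T_{[n-1]}$, assumes $T'$ is not the projection of any minimal trap space of $f$, and manufactures a subspace $S$ on which $f_i$ and $f_n$ are both non-constant while $\rd{f}_i$ is constant; a separate technical result (\cref{lemma:regulators}) then extracts a common regulator of $n$ and $i$, contradicting the hypothesis. Your argument is instead constructive: you directly build a trap space $\hat{T}$ of $f$ with $\hat{T}_{[n-1]} = T'$, splitting on whether $f_n$ is constant on $T'^{\star}$, and in the non-constant case you exploit the product structure of the subspace $T'$ together with the disjointness of $R$ from the regulators of each target to force $f_i$ to be constant on all of $T'^{\star}$. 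Your route is shorter and more elementary for this theorem, avoiding \cref{lemma:regulators} entirely; the paper's route, on the other hand, isolates \cref{lemma:regulators} as a standalone tool that it reuses later (in the proof of \cref{thm:perc-in-red-2}), so the detour pays off elsewhere in the manuscript.
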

\begin{proof}
  Consider $T'$ minimal trap space for $\rd{f}$.
  Suppose that there is no minimal trap space $T$ for $f$ such that $T' = T_{[n-1]}$.
  We show that there exists a regulator $j$ of $n$ that regulates a target $i$ of $n$.

  Write $T$ for the smallest trap space for $f$ that satisfies $T' \subseteq T_{[n-1]}$.

  If $T' = T_{[n-1]}$, then, by hypothesis, $T$ is not a minimal trap space.
  That is, $T$ contains a smaller trap space $T''$.
  By definition of $T$, $T''_{[n-1]}$ does not contain $T'_{[n-1]}$.
  By \cref{prop:trap-spaces} $T''_{[n-1]}$ is therefore a trap space for $\rd{f}$
  strictly contained in $T'$, in contradiction with the minimality of $T'$.
  Hence $T' \neq T_{[n-1]}$.

  Now consider the subspace $S \in \subs^n$ that satisfies $S_i = T'_i$ for $i \neq n$ and $S_n = T_n$.

  Suppose that $T_n$ is in $\{0,1\}$.
  Then, since $S$ is contained in $T$, for any $x \in S$ we have $f_n(x) = S_n = T_n$.
  Therefore for $i$ fixed in $T'$ we have $f_i(x) = f_i(x_{[n-1]}, T_n) = \rd{f}_i(x_{[n-1]})$,
  and since $x_{[n-1]}$ is in $T'$ we find $f_i(x) = T'_i = S_i$.
  Therefore $S$ is a trap space that satisfies $T' \subseteq S_{[n-1]}$ strictly contained in $T$,
  contradicting the definition of $T$.

  We therefore have that $S_n = T_n = {\star}$.
  $S$ is strictly contained in $T$, and is not a trap space by definition of $T$.
  Therefore there exists a component $i$ that is fixed in $S$
  such that $f_i$ is not constantly equal to $S_i$ on $S$.
  Since $T'$ is a trap space for $\rd{f}$ and $i$ is fixed in $T'$,
  we have that $\rd{f}_i(x_{[n-1]}) = f_i(\sigma(x_{[n-1]})) = S_i$ for all $x \in S$.

  Now suppose that $f_n$ is constant on $S$ and equal to $a$.
  Consider the subspace $S' = S \cap \{x_n = a\}$ which is contained in $S$.
  Then $f_n(x) = a = x_n$ for all $x \in S'$, and for all $j$ fixed in $S$
  we have $f_j(x) = f_j(x_{[n-1]}, a) = f_j(\sigma(x_{[n-1]})) = S_j$,
  and $S'$ is a trap space strictly contained in $T$ that satisfies $T' \subseteq S'_{[n-1]}$,
  a contradiction.

  We can therefore apply \cref{lemma:regulators} to $f$ and the subspace $S$.
  Since $i$ is fixed in $S$, the lemma gives the existence of a component $j \neq i$
  that regulates both $n$ and $i$.
\end{proof}

\begin{lemma}\label{lemma:regulators}
  Suppose that there exists a subspace $S$ with $S_n = {\star}$ such that
  $f_i$ and $f_n$ are not constant on $S$ and $\rd{f}_i$ is constant on $S_{[n-1]}$.
  Then there exists a component $j \neq n$ such that $S_j = {\star}$ that is a regulator
  of both $n$ and $i$.
\end{lemma}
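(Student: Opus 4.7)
The plan is a proof by contradiction. I would assume that no coordinate $j \neq n$ with $S_j = \star$ is simultaneously a regulator of $n$ and of $i$, and derive that $\rd{f}_i$ must then be non-constant on $S_{[n-1]}$, contradicting the hypothesis.

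First I would introduce the sets of coordinates that locally act on the two target functions within $S$:
\[
A = \{\, j \neq n : S_j = \star,\ \exists\, x \in S \text{ with } f_n(x) \neq f_n(\bar{x}^j) \,\},
\]
\[
B = \{\, j \neq n : S_j = \star,\ \exists\, x \in S \text{ with } f_i(x) \neq f_i(\bar{x}^j) \,\}.
\]
Each is contained in the set of (global) regulators of $n$, respectively $i$, so the contradiction hypothesis yields $A \cap B = \emptyset$. Non-constancy of $f_n$ on $S$, together with the fact that $n$ is not autoregulated, forces $A \neq \emptyset$ via a standard coordinate-by-coordinate interpolation between two states of $S$ on which $f_n$ takes different values.

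The central step is a factorisation: on $S$, the function $f_n$ depends only on the coordinates in $A$, and $f_i$ only on those in $B \cup \{n\}$. Indeed, for any $j \neq n$ that is free in $S$ but not in $A$, flipping $x_j$ keeps the state in $S$ and, by definition of $A$, does not alter $f_n$; coordinates fixed in $S$ are constant on $S$; and $n$ does not regulate itself. So one may write $f_n|_S(x) = G(x_A)$ and $f_i|_S(x) = F(x_B, x_n)$ for suitable functions $G$ and $F$. Then, for every $x \in S_{[n-1]}$,
\[
\rd{f}_i(x) \;=\; f_i(\sigma(x)) \;=\; F\bigl(x_B,\, G(x_A)\bigr).
\]
Because $A \cap B = \emptyset$, the blocks of entries $x_A$ and $x_B$ can be chosen independently while staying in $S_{[n-1]}$. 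Since $G$ is non-constant on $S$ it attains both values $0$ and $1$, so constancy of $\rd{f}_i$ forces $F(b, 0) = F(b, 1) = c$ for every assignment $b$ and some fixed $c$; that is, $F$ is constant. Hence $f_i|_S$ is constant, contradicting the hypothesis.

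The only delicate point is the factorisation step, but it reduces to the familiar observation that, on a subspace, a Boolean function depends only on those free coordinates that can change its value somewhere in that subspace. Everything else is elementary bookkeeping exploiting the disjointness $A \cap B = \emptyset$ to decouple the $x_A$- and $x_B$-dependencies of $\rd{f}_i$, from which the desired common regulator of $n$ and $i$ must exist.
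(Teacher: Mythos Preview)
Your argument is correct and takes a genuinely different route from the paper. The paper proceeds combinatorially: it first shows that $i$ must depend on $n$ within $S$, then picks a non-representative state $w\in S$ where $f_i$ differs from its value at $\sigma(w_{[n-1]})$ and a representative state $y\in S$ with $f_n(y)\neq f_n(w)$, takes such a pair at minimal Hamming distance, and analyses a single-coordinate flip $z=\bar w^j$ towards $y$ (and, if needed, the mirror flip $\bar y^j$) to force $j$ to regulate both $n$ and $i$. Your approach is structural rather than state-by-state: you isolate the local dependency sets $A$ and $B$ of $f_n$ and $f_i$ on $S$, observe that the contradiction hypothesis makes them disjoint, and then use the resulting product decomposition $\rd{f}_i(x)=F(x_B,G(x_A))$ together with surjectivity of $G$ to collapse $F$ to a constant. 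Your route is shorter and more transparent once the factorisation is granted; the paper's route is more elementary in that it never names the sets $A,B$ or invokes a dependency-factorisation principle, and it in effect pinpoints a specific witness $j$ along a shortest path between two carefully chosen states.
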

\begin{proof}
  If $i$ does not depend on $n$ on $S$, then for all $x \in S$ we have
  $f_i(x) = \rd{f}_i(x_{[n-1]})$ and $f_i$ is constant on $S$, contradicting the hypothesis.
  Therefore, $i$ is a target of $n$, and there exists a state $w \in S$ such that
  $w \neq \sigma(w_{[n-1]})$, $f_i(\sigma(w_{[n-1]})) = a$ and $f_i(w) = 1 - a$.
  Set $b = w_n$, so that $f_n(w) = 1 - b$.

  Since $f_n$ is not constant on $S$, there exists a state $y \in S$ such that $f_n(y) = b$.
  We can assume $\sigma(y_{[n-1]}) = y$, that is, $y_n = b$.
  Because $y$ is a representative state, we have $f_i(y) = a$.

  Summarizing, we have
  \[w_n = y_n = b,\]
  \[f_n(w) = 1 - b, f_n(y) = b,\]
  \[f_i(w) = 1 - a, f_i(y) = a.\]

  Observe that $w$ and $y$ are different states.
  Take the closest pair of states $w, y$ in $S$ that satisfy these conditions.

  Take a neighbour $z = \bar{w}^j$ of $w$ in $S$ closer to $y$ than $w$ ($z$ might coincide with $y$).
  Observe that $z_n = w_n = y_n = b$, and $j \neq n$.

  If $f_n(z) = b \neq f_n(w)$ (in particular, $z$ is representative), then, by the hypothesis on $\rd{f}_i = f_i \circ \sigma$,
  $f_i(z) = a \neq f_i(w)$, and $j$ is a regulator of both $i$ and $n$.

  If instead $f_n(z) = 1 - b$, then $z \neq y$ and, since the distance from $w$ to $y$ is minimal,
  again we must have $f_i(z) = a \neq f_i(w)$, and $j$ is a regulator of $i$.
  Now consider $v = \bar{y}^j$.
  If $f_n(v) = b$, then $f_i(v)=a$, contradicting the minimality of the distance between $w$ and $y$.
  Therefore, $f_n(v) = 1 - b \neq f_n(y)$ and $j$ regulates $n$, which concludes.
\end{proof}

The theorem gives a simple structural condition for selecting components to
eliminate without affecting the minimal trap spaces.
We have shown in particular that, if $T$ is a minimal trap space for $f$ and
$T'$ is a minimal trap space for $\rd{f}$ strictly contained in $T_{[n-1]}$, then
any component $i$ that is fixed in $T'$ and not in $T$
is regulated by $n$, as well as by at least one regulator of $n$ distinct from $i$.

We say that a component is \emph{linear} if it has exactly one regulator and one target.
A linear mediator component is therefore a particularly simple intermediate
whose role is just to mediate the regulation between two components.
In investigating mediator components, we considered whether an added assumption of linearity might guarantee better results
in terms of preservation of control strategies than the more general mediator assumption,
and found no additional benefits.
In the examples that investigate the impact of removal of mediator components,
we will consider in particular the elimination of linear mediator components,
in order to illustrate that even a seemingly minor modification of the interaction graph
can have consequences on the controllability of a network.

\subsection{Attractors}\label{sec:attractors}

Contrary to trap spaces, attractors are dependent of the update scheme.
The impact of reduction on attractors has been studied mostly
under asynchronous dynamics \citep{naldi2009reduction,naldi2011dynamically,veliz2011reduction,tonello2023attractor,schwieger2024reduction}.
Here we make some observations that highlight some differences
between synchronous dynamics and other updating schemes.

We first observe that trap sets are converted to trap sets in the reduction, except
in the synchronous dynamics.

\begin{lemma}\label{lemma:trap-sets}
  For $\D$ in $\{\AD, \GAD\}$,
  if $A \subseteq \B^n$ is a trap set for $\D(f)$,
  then $A_{[n-1]}$ is a trap set for $\D(\rd{f})$.
  Moreover, $\sigma(x)$ is in $A$ for all $x \in A_{[n-1]}$.
\end{lemma}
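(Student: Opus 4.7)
The plan is to lift arguments from $\D(\rd{f})$ back to $\D(f)$ using the two correspondence observations (1) and (2) listed right after \cref{eq:def-red-2}, and then apply the trap-set hypothesis on $A$. The restriction $\D \in \{\AD, \GAD\}$ is needed precisely because observation (1) — the existence of a transition from every non-representative state to its representative — holds in these two dynamics but fails for $\SD$.

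First I would prove the ``moreover'' clause. Given $x \in A_{[n-1]}$, pick any witness $y \in A$ with $y_{[n-1]} = x$. Either $y = \sigma(x)$ and we are done, or $y_n \neq (\sigma(x))_n = f_n(y)$ (using that $n$ is not autoregulated, so $f_n(y) = f_n(x,0)$). In the latter case $y \to \bar{y}^n = \sigma(x)$ is a single-component transition belonging to both $\AD(f)$ and $\GAD(f)$, so the trap-set property of $A$ under $\D(f)$ yields $\sigma(x) \in A$.

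For the main claim, suppose $x \to y$ is a transition in $\D(\rd{f})$ with $x \in A_{[n-1]}$, and set $J = \{i \in [n-1] : y_i \neq x_i\}$, so that $y = \bar{x}^J$ with $J \subseteq [n]\setminus\{n\}$. By observation (2), there is a corresponding transition $\sigma(x) \to \overline{\sigma(x)}^J$ in $\D(f)$; combined with the moreover clause this gives $\sigma(x) \in A$, and the trap-set property forces $\overline{\sigma(x)}^J \in A$. Since $J$ does not touch component $n$, projecting onto $[n-1]$ yields $y = (\overline{\sigma(x)}^J)_{[n-1]} \in A_{[n-1]}$, which closes the argument.

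I do not foresee any real technical obstacle: the proof amounts to bookkeeping around observations (1) and (2). The one point worth flagging is exactly why the lemma excludes $\SD$ — in synchronous dynamics a state $y \in A$ with $y \neq \sigma(y_{[n-1]})$ need not have any synchronous transition to $\sigma(y_{[n-1]})$, so the moreover step can fail and projections of synchronous trap sets can genuinely miss transitions appearing in $\SD(\rd{f})$.
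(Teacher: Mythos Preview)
Your proposal is correct and follows essentially the same approach as the paper: first establish $\sigma(x)\in A$ via the transition $\overline{\sigma(x)}^n\to\sigma(x)$ (observation (1)), then lift any transition $x\to\bar{x}^J$ in $\D(\rd{f})$ to the transition $\sigma(x)\to\overline{\sigma(x)}^J$ in $\D(f)$ (observation (2)) and project back. The only cosmetic difference is that the paper re-derives observation (2) inline rather than citing it.
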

\begin{proof}
  The last observation follows from the fact that, for $x \in A_{[n-1]}$,
  either $\sigma(x)$ or $\overline{\sigma(x)}^n$ is in $A$,
  and the first is a successor of the second.

  Suppose that $x \in A_{[n-1]}$ and that $\D(\rd{f})$
  contains a transition from $x$ to $y \neq x$.
  We want to show that $y$ is in $A_{[n-1]}$.
  Call $I$ the set of indices such that $\bar{x}^I = y$.

  For all $i \in I$, $\rd{f}_i(x) = f_i(\sigma(x)) \neq x_i$,
  therefore there is a transition in $\D(f)$ from $\sigma(x)$ to $\overline{\sigma(x)}^I$.
  Since the representative state $\sigma(x)$ belongs to $A$,
  $\overline{\sigma(x)}^I$ is also in $A$,
  and $\overline{\sigma(x)}^I_{[n-1]} = \bar{x}^I = y$ is in $A_{[n-1]}$.
\end{proof}

\begin{example}
  Take $f(x_1,x_2,x_3) = (x_1 \bar{x}_3, 0, 0)$, which reduces to $\rd{f}(x_1,x_2) = (x_1, 0)$
  under elimination of $x_3$.
  The states $000$ and $111$ both map to $000$, hence $A = \{000, 111\}$ is a trap set for $\SD(f)$.

  In $\SD(\rd{f})$, $00$ maps to $00$ but $11$ maps to $10$,
  hence $A_{[n-1]} = \{00, 11\}$ is not a trap set.
\end{example}

\begin{lemma}\label{lemma:no-2-attrs}
  For $\D$ in $\{\AD, \GAD\}$,
  if $A$ is an attractor for $\D(f)$, then $A^{\star}_{[n-1]}$
  contains at most one attractor for $\D(f)$.
\end{lemma}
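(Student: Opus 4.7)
The plan is to argue by direct application of \cref{lemma:trap-sets}, showing that any two attractors of $\D(f)$ contained in $A^{\star}_{[n-1]}$ must share a common state and hence coincide, since distinct attractors (being minimal trap sets) are disjoint.

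First I would set up the argument: let $A$ and $B$ be attractors of $\D(f)$ with $A, B \subseteq A^{\star}_{[n-1]}$. By definition of $A^{\star}_{[n-1]}$, both projections $A_{[n-1]}$ and $B_{[n-1]}$ are contained in $A_{[n-1]}$ (trivially for the first, and because every element of $B$ has its projection in $A_{[n-1]}$ for the second). The nonempty set $B$ admits at least one element, so pick any $x \in B_{[n-1]} \subseteq A_{[n-1]}$.

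Next I would invoke the second conclusion of \cref{lemma:trap-sets} twice. Applied to the trap set $A$ of $\D(f)$, it gives $\sigma(x) \in A$; applied to the trap set $B$ of $\D(f)$, it gives $\sigma(x) \in B$. Hence $\sigma(x) \in A \cap B$, so $A \cap B$ is nonempty. But the intersection of two trap sets is itself a trap set, and attractors are minimal trap sets; therefore $A \cap B$ being nonempty forces $A = A \cap B = B$.

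I do not foresee a serious obstacle: the work is already done in \cref{lemma:trap-sets}, which requires $\D \in \{\AD, \GAD\}$ precisely because in the synchronous case the second assertion (representatives of $A_{[n-1]}$ lie in $A$) can fail — the example immediately preceding this lemma illustrates exactly that failure. The only small subtlety to flag is the use of the fact that attractors are minimal trap sets and that intersections of trap sets are trap sets, which together ensure disjointness of distinct attractors.
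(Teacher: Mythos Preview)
Your proof is correct and follows essentially the same idea as the paper's: both hinge on the fact that for any $x\in A_{[n-1]}$ (or $B_{[n-1]}$) the representative state $\sigma(x)$ lies in the trap set, so that any second attractor $B\subseteq (A_{[n-1]})^{\star}$ must share $\sigma(x)$ with $A$ and hence coincide with it. The only cosmetic difference is that you invoke the ``moreover'' clause of \cref{lemma:trap-sets}, whereas the paper re-derives that clause inline by pointing to the transition $\overline{\sigma(x)}^n\to\sigma(x)$; your version is in fact slightly more explicit about the conclusion (disjointness of distinct attractors via minimality of trap sets), which the paper leaves implicit.
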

\begin{proof}
  Take a state $x$ in $A_{[n-1]}$. Then either $\overline{\sigma(x)}^n$ or $\sigma(x)$ belongs to $A$.
  Since in $\D(f)$ there is a transition from $\overline{\sigma(x)}^n$ to $\sigma(x)$,
  if $\overline{\sigma(x)}^n$ belongs to an attractor, then $\sigma(x)$ belongs to the same attractor.
\end{proof}

Consequence of \cref{lemma:trap-sets,lemma:no-2-attrs} is that
in asynchronous and generalized asynchronous dynamics
the number of attractors cannot decrease with the reduction.

As happens for \cref{lemma:trap-sets}, \cref{lemma:no-2-attrs} also fails for the synchronous dynamics,
since a state and its representative are not always linked by a transition.

\begin{example}\label{ex:synch}
  Consider the map $f(x_1,x_2) = (x_2, x_1)$ and the elimination of the second component.
  In the synchronous dynamics, there is no transition from state $01$ to its representative $00$,
  and from state $10$ to its representative $11$.

  The dynamics has three attractors: the steady states $00$ and $11$, and the cycle $A = \{01, 10\}$.
  The cycle projects to $A_{[n-1]} = \{0,1\}$, and $A^{\star}_{[n-1]}$ contains three attractors.
\end{example}

In the previous section we proved that, if the component being eliminated and its targets have no regulator in common,
then to each minimal trap space of the original network corresponds a unique minimal trap space
of the reduced network (\cref{thm:min-ts}).
In particular, under these conditions the attractors of the most permissive dynamics of $f$
and the attractors of the most permissive dynamics of $\rd{f}$ are in one-to-one correspondence.
The same conclusion does \emph{not} hold, in general, for attractors in other dynamics.
An illustration of such scenario is given in \cref{fig:ex:fixed-in-P-1-mediator}.

\section{Phenotype control and reduction}\label{sec:control-reduction}

Recall that, for the purpose of this work, a control strategy is a subspace on which the
dynamics can be restricted to cause some desired effects on the asymptotic dynamics.

Given a Boolean network $f$ and a phenotype $P$, we ask the following questions:
\begin{question}\label{q1}
  If $S$ is a control strategy for $(f, P)$,
  is the subspace $S_{[n-1]}$ a control strategy for $(\rd{f}, P_{[n-1]})$?
  More generally, does $(\rd{f}, P_{[n-1]})$ admit a control strategy?
\end{question}
\begin{question}\label{q2}
  If there exists is a control strategy for $(\rd{f}, P_{[n-1]})$, does $(f, P)$
  admit a control strategy?
\end{question}

We look at answering these questions, in the general case and in the case
of removal of a mediator node.
As explained in \cref{sec:control}, we only consider control strategies where component $n$ is free
(assumption \ref{eq:assumption-Sn-free}) and 
that do not fix any component that is fixed in the phenotype (assumption \ref{eq:assumption-Pi-Si}),
while $n$ can be free or fixed in the phenotype.
The results that we present in this section are summarized in \cref{table:summary}.
We start by discussing the cases that have a positive answer.

\begin{table}
  \begin{subtable}{\linewidth}
  \centering
  \begin{tabular}{l|c|c||c|c|}
    \cline{2-5}
    & \multicolumn{2}{c||}{\makecell{$\exists$ CS for $(f,P)$ $\Rightarrow$\\$\exists$ CS for $(\rd{f},P_{[n-1]})$}} &
      \multicolumn{2}{c|}{\makecell{$\exists$ CS for $(\rd{f},P_{[n-1]})$\\$\Rightarrow$ $\exists$ CS for $(f,P)$}} \\
    \cline{2-5}
    & & $\begin{tikzcd}[column sep=tiny] I \arrow[r] \arrow[rr,bend right=20,gray,"/" marking] & n \arrow[r] & J\end{tikzcd}$ &
      & $\begin{tikzcd}[column sep=tiny] I \arrow[r] \arrow[rr,bend right=20,gray,"/" marking] & n \arrow[r] & J\end{tikzcd}$ \\
    \cline{2-5}
    \hline
    $\AD$ & \multirow{4}*{\xmark Ex.~\ref{ex:fixed-in-P-1}} & \multirow{3}*{\xmark Ex.~\ref{ex:fixed-in-P-1-mediator}} & \multirow{4}*{\xmark Ex.~\ref{ex:fixed-in-P-2}} & \multirow{4}*{\xmark Ex.~\ref{ex:fixed-in-P-2-mediator}} \\
    \cline{1-1}
    $\GAD$ & & & & \\
    \cline{1-1}
    $\SD$ & & & & \\
    \cline{1-1}\cline{3-3}
    MTS & & \checkmark Thm.~\ref{thm:min-ts-control} & & \\
    \cline{1-3}
    VP & \multicolumn{2}{c||}{\checkmark Thm.~\ref{thm:propagation}} & & \\
    \hline
  \end{tabular}
  \caption{$n$ fixed in $P$}\label{table:summary-fixed}
  \bigskip
  \end{subtable}
  \begin{subtable}{\linewidth}
  \centering
  \begin{tabular}{l|c|c||c|c|}
    \cline{2-5}
    & \multicolumn{2}{c||}{\makecell{$\exists$ CS for $(f,P)$ $\Rightarrow$\\$\exists$ CS for $(\rd{f},P_{[n-1]})$}} &
      \multicolumn{2}{c|}{\makecell{$\exists$ CS for $(\rd{f},P_{[n-1]})$\\$\Rightarrow$ $\exists$ CS for $(f,P)$}} \\
    \cline{2-5}
    & & $\begin{tikzcd}[column sep=tiny] I \arrow[r] \arrow[rr,bend right=20,gray,"/" marking] & n \arrow[r] & J\end{tikzcd}$ &
      & $\begin{tikzcd}[column sep=tiny] I \arrow[r] \arrow[rr,bend right=20,gray,"/" marking] & n \arrow[r] & J\end{tikzcd}$ \\
    \cline{2-5}
    \hline
    $\AD$ & \multirow{4}*{\xmark Ex.~\ref{ex:free-in-P-CS-to-no-CS}} & \multirow{3}*{\xmark Ex.~\ref{ex:free-in-P-CS-to-no-CS-mediator}} & \multicolumn{2}{c|}{\xmark Ex.~\ref{ex:new-cs-mediator}} \\
    \cline{1-1}\cline{4-5}
    $\GAD$ & & & \multicolumn{2}{c|}{\xmark Ex.~\ref{ex:new-cs-mediator}} \\
    \cline{1-1}\cline{4-5}
    $\SD$ & & & \multicolumn{2}{c|}{\xmark Ex.\ref{ex:new-cs-mediator}} \\
    \cline{1-1}\cline{3-5}
    MTS & & \checkmark Thm.~\ref{thm:min-ts-control} & \xmark Ex.~\ref{ex:no-CS-to-CS-not-fixed} & \checkmark Thm.~\ref{thm:min-ts-control} \\
    \hline
    VP & \multicolumn{2}{c||}{\checkmark Thm.~\ref{thm:propagation}} & \makecell{\xmark~VP, $\SD$ Ex.~\ref{ex:prop},\ref{ex:free-in-P-no-CS-to-CS}\\ \checkmark~$\AD$, $\GAD$ Thm.~\ref{thm:perc-in-red}} & \checkmark Thm.~\ref{thm:perc-in-red-2} \\
    \hline
  \end{tabular}
  \caption{$n$ free in $P$}\label{table:summary-not-fixed}
  \end{subtable}
  \caption{Summary of results about phenotype control and reduction, for (a) $n$ fixed in the target phenotype $P$
      and (b) $n$ free in the target phenotype $P$.
      We studied whether the existence of a control strategy (CS) in the initial (resp. reduced)
      network always implies the existence of a control strategy in the reduced (resp. initial)
      network. Control strategies target attractors in asynchronous ($\AD$), general asynchronous
      ($\GAD)$, and synchronous ($\SD$) dynamics, as well as minimal trap spaces (MTS).
      VP stands for control by value propagation.
      In each case, we considered any network, and networks where $n$ is a mediator node
      (no regulator of node $n$ regulates a target of $n$).
      The checkmark ($\checkmark$) indicates that the property is always true, whereas
      the cross (\xmark) indicates the existence of counterexamples.}\label{table:summary}
\end{table}

\subsection{Control of minimal trap spaces}\label{sec:control-mts}

\begin{proposition}\label{thm:control-mts}
  Consider an MTS-control strategy $S$ for $(f, P)$ with $S_n = {\star}$.
  Suppose that for each minimal trap space $T'$ of $\rd{C(f, S)}$
  there exists a minimal trap space $T$ of $C(f, S)$ such that 
  $T' \subseteq T_{[n-1]}$.
  Then $S_{[n-1]}$ is an MTS-control strategy for $(\rd{f}, P_{[n-1]})$.
\end{proposition}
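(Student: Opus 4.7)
The plan is to reduce the claim to the hypothesis by commuting reduction with control, and then project the inclusion witnessing MTS-control.

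First, since $n$ is free in $S$, I would apply \cref{prop:red-sub} to rewrite $C(\rd{f}, S_{[n-1]}) = \rd{C(f, S)}$. Thus the minimal trap spaces of $C(\rd{f}, S_{[n-1]})$ are precisely the minimal trap spaces of $\rd{C(f, S)}$, and it suffices to show that each of these is contained in $P_{[n-1]}$.

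Next I would take an arbitrary minimal trap space $T'$ of $\rd{C(f, S)}$. The hypothesis of the proposition directly supplies a minimal trap space $T$ of $C(f, S)$ with $T' \subseteq T_{[n-1]}$. Because $S$ is assumed to be an MTS-control strategy for $(f, P)$, every minimal trap space of $C(f, S)$ is contained in $P$, so $T \subseteq P$. Projecting to the first $n-1$ coordinates preserves this inclusion, giving $T_{[n-1]} \subseteq P_{[n-1]}$, and combining yields $T' \subseteq T_{[n-1]} \subseteq P_{[n-1]}$.

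There is no real obstacle: the statement is essentially the composition of \cref{prop:red-sub} (to commute control and reduction) with the tautological projection of the MTS-inclusion witnessed by the hypothesis. The only point that needs to be stated carefully is that the hypothesis is formulated on $\rd{C(f,S)}$ rather than on $C(\rd{f}, S_{[n-1]})$, so that the appeal to \cref{prop:red-sub} at the beginning is what makes the two sides of the statement match up.
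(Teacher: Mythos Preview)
Your proof is correct and follows essentially the same approach as the paper: invoke \cref{prop:red-sub} to identify $C(\rd{f}, S_{[n-1]})$ with $\rd{C(f, S)}$, then use the hypothesis together with $T \subseteq P$ to conclude $T' \subseteq P_{[n-1]}$. You have simply spelled out the projection step $T \subseteq P \Rightarrow T_{[n-1]} \subseteq P_{[n-1]}$ more explicitly than the paper's two-line version.
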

\begin{proof}
  By \cref{prop:red-sub}, $C(\rd{f}, S_{[n-1]}) = \rd{C(f, S)}$.
  Since all minimal trap spaces of $C(f,S)$ are contained in $P$,
  we find that all minimal trap spaces of $C(\rd{f},S_{[n-1]})$ are contained in $P_{[n-1]}$.
\end{proof}

The proposition gives a possible strategy to answer \cref{q1} positively.
To answer \cref{q2} positively, we need to ensure that minimal trap spaces cannot ``shrink'' with the reduction,
possibly leading to emergence of some new MTS-control strategies.

\begin{proposition}\label{thm:control-mts-2}
  Consider a Boolean network $f$, a phenotype $P$ with $P_n={\star}$ and an MTS-control strategy $S$ for $(\rd{f}, P_{[n-1]})$.
  Suppose that, for each minimal trap space $T$ for $C(f, S^{\star})$,
  $T_{[n-1]}$ is a minimal trap space for $\rd{C(f, S^{\star})}$.
  Then the subspace $S^{\star}$ is an MTS-control strategy for $(f, P)$.
\end{proposition}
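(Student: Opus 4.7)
The plan is to chain three observations, all essentially bookkeeping on subspaces and their projections. The key setup fact is that $S^{\star}$ is a subspace of $\B^n$ with $(S^{\star})_n = {\star}$ and $(S^{\star})_{[n-1]} = S$, so the component being eliminated is free in the control strategy and \cref{prop:red-sub} applies.

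First, I would invoke \cref{prop:red-sub} with control strategy $S^{\star}$, obtaining
\[
C(\rd{f}, S) \;=\; C\bigl(\rd{f}, (S^{\star})_{[n-1]}\bigr) \;=\; \rd{C(f, S^{\star})}.
\]
Next, take any minimal trap space $T$ of $C(f, S^{\star})$. By the hypothesis of the proposition, $T_{[n-1]}$ is a minimal trap space of $\rd{C(f, S^{\star})}$; by the equality above, it is a minimal trap space of $C(\rd{f}, S)$. Because $S$ is assumed to be an MTS-control strategy for $(\rd{f}, P_{[n-1]})$, it follows that $T_{[n-1]} \subseteq P_{[n-1]}$.

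Finally, I would lift this containment from $\B^{n-1}$ back to $\B^n$ using that $P_n = {\star}$: a state $x \in \B^n$ lies in $P$ if and only if $x_{[n-1]} \in P_{[n-1]}$ (there is no constraint on component $n$). Hence $T \subseteq (P_{[n-1]})^{\star} = P$, so every minimal trap space of $C(f, S^{\star})$ is contained in $P$, i.e. $S^{\star}$ is an MTS-control strategy for $(f, P)$.

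There is no real obstacle here: the hypothesis that projections of minimal trap spaces are minimal trap spaces of the reduction is exactly what is needed to transport the assumption on $S$ back to $f$, and the assumption $P_n = {\star}$ is precisely what ensures that containment in the projected phenotype implies containment in the full phenotype. The only thing to be careful about is making sure the three ingredients — \cref{prop:red-sub}, the hypothesis on minimal trap spaces, and the freeness of $n$ in $P$ — are each applied at the right step and not conflated.
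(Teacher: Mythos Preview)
Your proof is correct and follows essentially the same approach as the paper: apply \cref{prop:red-sub} to identify $C(\rd{f},S)$ with $\rd{C(f,S^{\star})}$, use the hypothesis to send each minimal trap space $T$ of $C(f,S^{\star})$ to a minimal trap space $T_{[n-1]}\subseteq P_{[n-1]}$, and then lift back using $P_n={\star}$. The paper's version is just more terse.
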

\begin{proof}
  Since $n$ is free in $S^{\star}$, by \cref{prop:red-sub}, $C(\rd{f}, S^{\star}_{[n-1]}=S) = \rd{C(f, S^{\star})}$.
  Given a minimal trap space $T$ for $C(f, S^{\star})$,
  $T_{[n-1]}$ is a minimal trap space contained in $P_{[n-1]}$. Therefore, by definition of $S^{\star}$, $T$ is contained in $P$.
\end{proof}

Observe that, given any subspace $S$, by \cref{rmk:ig-subgraph},
if $n$ is a mediator node for $f$, then $n$ is a mediator node also for $C(f, S)$.
Therefore, combining the results above with \cref{thm:min-ts}, we have the following.

\begin{theorem}\label{thm:min-ts-control}
  Consider a Boolean network $f$ and a phenotype $P$.
  Suppose that no regulator of $n$ regulates a target of $n$.
  \begin{itemize}
    \item[(i)] If $S$ is an MTS-control strategy $S$ for $(f, P)$ with $S_n = {\star}$,
          then $S_{[n-1]}$ is an MTS-control strategy for $(\rd{f}, P_{[n-1]})$.
    \item[(ii)] If $S$ is an MTS-control strategy $S$ for $(\rd{f}, P_{[n-1]})$ and $P_n={\star}$,
          then the subspace $S^{\star}$ is an MTS-control strategy for $(f, P)$.
  \end{itemize}
\end{theorem}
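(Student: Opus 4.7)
The plan is to combine \cref{thm:min-ts} with the two preparatory propositions \cref{thm:control-mts,thm:control-mts-2}, exploiting the fact (just flagged in the paragraph before the theorem) that the mediator condition on $n$ is inherited by any controlled network $C(f,S)$. Indeed, \cref{rmk:ig-subgraph} says that the interaction graph of $C(f,S)$ is a subgraph of that of $f$, so if no regulator of $n$ regulates a target of $n$ in $f$, the same holds in $C(f,S)$ for any subspace $S$ in which $n$ is free. This means that whenever I apply control by a strategy leaving $n$ free, I may freely invoke \cref{thm:min-ts} on the controlled network.

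For part (i), I would start from an MTS-control strategy $S$ for $(f,P)$ with $S_n={\star}$ and apply \cref{thm:min-ts} to the controlled network $C(f,S)$, whose mediator structure on $n$ is inherited as above. The conclusion is that the minimal trap spaces of $C(f,S)$ are strictly preserved by elimination of $n$: every minimal trap space $T'$ of $\rd{C(f,S)}$ has the form $T_{[n-1]}$ for some minimal trap space $T$ of $C(f,S)$. In particular the inclusion $T'\subseteq T_{[n-1]}$ required as hypothesis by \cref{thm:control-mts} holds, and that proposition delivers the desired conclusion that $S_{[n-1]}$ is an MTS-control strategy for $(\rd{f},P_{[n-1]})$.

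For part (ii), I would take an MTS-control strategy $S$ for $(\rd{f},P_{[n-1]})$ with $P_n={\star}$ and form $S^{\star}$, in which component $n$ is free by construction. Again the mediator property passes from $f$ to $C(f,S^{\star})$ via \cref{rmk:ig-subgraph}, so \cref{thm:min-ts} yields strict preservation of minimal trap spaces under elimination of $n$ for the network $C(f,S^{\star})$. As noted immediately after \cref{def:pres-min-ts}, strict preservation in fact provides a bijection between minimal trap spaces of $C(f,S^{\star})$ and of $\rd{C(f,S^{\star})}$ via the projection $T\mapsto T_{[n-1]}$, so for every minimal trap space $T$ of $C(f,S^{\star})$ the projection $T_{[n-1]}$ is a minimal trap space of $\rd{C(f,S^{\star})}$. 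This is exactly the hypothesis of \cref{thm:control-mts-2}, and applying it concludes.

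The proof is essentially bookkeeping, since the substantive content has already been isolated in \cref{thm:min-ts}; the only point worth verifying carefully is that \cref{def:pres-min-ts} really gives the projection bijection in both directions used in part (ii), which is exactly the remark following that definition. No genuine obstacle remains beyond lining up the hypotheses of \cref{thm:control-mts,thm:control-mts-2}.
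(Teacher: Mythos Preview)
Your proposal is correct and follows essentially the same approach as the paper: the paper simply observes that the mediator condition on $n$ passes from $f$ to $C(f,S)$ via \cref{rmk:ig-subgraph}, and then invokes \cref{thm:min-ts} together with \cref{thm:control-mts,thm:control-mts-2}. Your write-up is in fact a more explicit unpacking of this same argument, including the use of the bijection remark following \cref{def:pres-min-ts} to verify the hypothesis of \cref{thm:control-mts-2}.
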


\subsection{Control by value propagation}\label{sec:value-propagation}

Control strategies by value propagation are the strongest.
It is not surprising then that some correspondence can be established between
these strategies and strategies of reduced networks.
We start with a lemma.

\begin{lemma}\label{lemma:propagation}
  If $S$ is a trap space and $\Phi^k_{f}(S)$ is contained in a subspace $P$ for some $k \geq 1$,
  then $\Phi^k_{\rd{f}}(S_{[n-1]})$ is contained in $P_{[n-1]}$.
\end{lemma}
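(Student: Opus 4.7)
The plan is to prove the stronger containment $\Phi^k_{\rd{f}}(S_{[n-1]}) \subseteq \Phi^k_f(S)_{[n-1]}$ by induction on $k$, from which the lemma follows immediately by projecting $\Phi^k_f(S) \subseteq P$ onto the first $n-1$ coordinates.

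For the base case $k=1$, I first establish the key set-level inclusion $\sigma(S_{[n-1]}) \subseteq S$. This is where the hypothesis that $S$ is a trap space matters: if $S_n = \star$, then trivially $(x, f_n(x,0)) \in S$ for any $x \in S_{[n-1]}$; if instead $S_n \in \{0,1\}$, then $f_n$ is constant equal to $S_n$ on $S$, and since $n$ is not autoregulated (needed for reduction to be defined) we have $f_n(x,0) = S_n$, so again $\sigma(x) \in S$. Using $\rd{f}_i = f_i \circ \sigma$ from \eqref{eq:def-red-2}, this gives $\rd{f}(S_{[n-1]}) \subseteq f(S)_{[n-1]} \subseteq \Phi_f(S)_{[n-1]}$. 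Since the projection of a subspace is a subspace, $\Phi_f(S)_{[n-1]}$ is a subspace containing $\rd{f}(S_{[n-1]})$, and therefore it contains the minimal such subspace $\Phi_{\rd{f}}(S_{[n-1]})$.

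For the inductive step, assume $\Phi^k_{\rd{f}}(S_{[n-1]}) \subseteq \Phi^k_f(S)_{[n-1]}$. The crucial point is that $\Phi^k_f(S)$ is itself a trap space (by iteration of the fact, recalled just before the definition of $\phi$, that $\Phi_f$ preserves trap spaces), so the base case applies to it and gives $\Phi_{\rd{f}}(\Phi^k_f(S)_{[n-1]}) \subseteq \Phi^{k+1}_f(S)_{[n-1]}$. Combining this with the monotonicity of $\Phi_{\rd{f}}$ with respect to subspace inclusion (immediate from the monotonicity of $\rd{f}$ on sets and the definition of $\Phi_{\rd{f}}$ as the smallest enclosing subspace), we obtain
\[
\Phi^{k+1}_{\rd{f}}(S_{[n-1]}) \subseteq \Phi_{\rd{f}}(\Phi^k_f(S)_{[n-1]}) \subseteq \Phi^{k+1}_f(S)_{[n-1]}.
\]

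No step is particularly hard; the only subtlety worth spelling out carefully is the check that $\sigma(S_{[n-1]}) \subseteq S$, since without the trap space hypothesis $\sigma$ need not preserve $S$ and the inclusion $\rd{f}(S_{[n-1]}) \subseteq f(S)_{[n-1]}$ could fail. Everything else is bookkeeping with the definitions of $\Phi$ and $\rd{f}$.
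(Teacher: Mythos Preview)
Your proof is correct and follows essentially the same route as the paper: both argue by induction on $k$ that $\Phi^k_{\rd{f}}(S_{[n-1]}) \subseteq (\Phi^k_f(S))_{[n-1]}$, using the fact that $\sigma$ maps $S_{[n-1]}$ into $S$ when $S$ is a trap space. Your write-up is somewhat more explicit in isolating the base case as a reusable lemma and invoking monotonicity of $\Phi_{\rd{f}}$ in the inductive step, whereas the paper handles the inductive step more directly by observing that $\sigma(x) \in \Phi^k_f(S)$ for each $x \in \Phi^k_{\rd{f}}(S_{[n-1]})$; but these are stylistic variations on the same argument.
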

\begin{proof}
  We show, by induction on $k$, that $\Phi^k_{\rd{f}}(S_{[n-1]})$ is contained in
  $(\Phi^k_{f}(S))_{[n-1]}$.

  For all $x \in S_{[n-1]}$, $\sigma(x)$ is in $S$, and $\rd{f}(x)_i = f_i(\sigma(x))$ for all $i \neq n$.
  Therefore, $\Phi_{\rd{f}}(S_{[n-1]})$ is contained in $(\Phi_{f}(S))_{[n-1]}$.

  Now suppose that $\Phi^k_{\rd{f}}(S_{[n-1]})$ is contained in $(\Phi^k_{f}(S))_{[n-1]}$.
  We show that $\Phi^{k+1}_{\rd{f}}(S_{[n-1]})$ is contained in $(\Phi^{k+1}_{f}(S))_{[n-1]}$.
  Since $\sigma(x)$ is in $\Phi^k_{f}(S)$ for all $x$ in $\Phi^k_{\rd{f}}(S_{[n-1]})$,
  we have again that $\Phi^{k+1}_{\rd{f}}(S_{[n-1]}) = \Phi_{\rd{f}}(\Phi^k_{\rd{f}}(S_{[n-1]}))$
  is contained in $(\Phi^{k+1}_f(S))_{[n-1]}$.
\end{proof}

If $S$ is not a trap space, then the lemma might fail, as shown in this simple example.
\begin{example}
Take $f(x_1, x_2) = (x_1 x_2, x_1)$, reducing to $\rd{f}(x_1) = x_1$ by removal of the second component.
Consider $S = {\star} 0$, which is not a trap space. Clearly, $\Phi_f(S)$ is contained in $P = 0{\star}$.
We have $S_{[n-1]} = {\star}$, and $\rd{f}(1) = 1$ which is not contained in $P_{[n-1]} = 0$.
\end{example}

\begin{theorem}\label{thm:propagation}
  Suppose that $S$ is a control strategy by value propagation for $(f,P)$,
  and that $\rd{f}$ is obtained from $f$ by removing a component that is free in $S$.
  Then $S_{[n-1]}$ is a control strategy by value propagation for $(\rd{f}, P_{[n-1]})$.
\end{theorem}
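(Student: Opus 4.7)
The plan is to reduce the statement immediately to Lemma \ref{lemma:propagation} via Proposition \ref{prop:red-sub}. Write $g = C(f, S)$ for brevity. Since the component $n$ being eliminated is free in $S$, Proposition \ref{prop:red-sub} applies and yields
\[
C(\rd{f}, S_{[n-1]}) \;=\; \rd{C(f, S)} \;=\; \rd{g}.
\]
Hence the goal $\phi(C(\rd{f}, S_{[n-1]})) \subseteq P_{[n-1]}$ becomes $\phi(\rd{g}) \subseteq P_{[n-1]}$, and the hypothesis that $S$ is a control strategy by value propagation for $(f, P)$ becomes $\phi(g) = \Phi_g^n(\B^n) \subseteq P$.

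Next, I would instantiate Lemma \ref{lemma:propagation} with $f$ replaced by $g$, with the trap space taken to be the whole state space $\B^n$ (which is trivially a trap space), and with $k = n$. The hypothesis of the lemma is satisfied because $\Phi_g^n(\B^n) = \phi(g) \subseteq P$, and the conclusion delivers $\Phi_{\rd{g}}^n(\B^{n-1}) \subseteq P_{[n-1]}$.

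The last step is to observe that this already gives the desired inclusion. Indeed, as noted in the paragraph preceding the definition of control by propagation, the iterates $\Phi_{\rd{g}}^j(\B^{n-1})$ form a decreasing sequence of subspaces inside $\B^{n-1}$, so they must stabilize in at most $n-1$ steps; in particular $\Phi_{\rd{g}}^n(\B^{n-1}) = \Phi_{\rd{g}}^{n-1}(\B^{n-1}) = \phi(\rd{g})$. Combining with the previous step gives $\phi(\rd{g}) \subseteq P_{[n-1]}$, which is what we wanted.

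There is no real obstacle beyond careful bookkeeping: the substantive work was already carried out in Lemma \ref{lemma:propagation}, and the only point that requires a moment of care is noticing that Proposition \ref{prop:red-sub} is legitimately applicable (the freeness of $n$ in $S$ is exactly its hypothesis) and that iterating one extra time past the reduced network's dimension is harmless because percolation stabilizes.
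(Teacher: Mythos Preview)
Your proof is correct and follows essentially the same approach as the paper's: both reduce to Lemma~\ref{lemma:propagation} via Proposition~\ref{prop:red-sub} applied to the full state space. The only cosmetic difference is that you fix $k=n$ and explicitly argue stabilization, whereas the paper simply takes ``$k$ sufficiently large''.
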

\begin{proof}
  We need to show that $\phi(C(\rd{f}, {S_{[n-1]}}))$ is contained in $P_{[n-1]}$.
  By \cref{prop:red-sub}, we have that $\phi(C(\rd{f}, {S_{[n-1]}})) = \phi(\rd{C(f, S)})$.
  By hypothesis, $\phi(C(f, S)) \subseteq P$, meaning $\Phi^k_{C(f, S)}(\B^n) \subseteq P$ for $k$ sufficiently large.
  By \cref{lemma:propagation}, $\Phi^k_{\rd{C(f, S)}}(\B^{n-1}) \subseteq P_{[n-1]}$ for $k$ sufficiently large,
  which concludes.
\end{proof}

The meaning of the result is that, if we are interested in control by propagation,
a component that is not a candidate target for control can be eliminated,
without loss of control strategies.

\begin{example}\label{ex:prop}
  The existence of a control strategy by value propagation for $(\rd{f}, P_{[n-1]})$
  does not imply the existence of a control strategy by value propagation for $(f, P)$.
  For example, take
  $f(x_1, x_2, x_3) = (x_1 \bar{x}_2 {\vee} x_1 \bar{x}_3 {\vee} x_2 x_3 \bar{x}_1, x_1 \bar{x}_3, x_1)$,
  which reduces to $\rd{f}(x_1, x_2) = (x_1 \bar{x}_2, 0)$.
  For $P=0{\star}{\star}$, there are no control strategies by value propagation
  ($S = {\star} 1 {\star}$ is an MTS-control strategy
  and an attractor-control strategy for asynchronous and general asynchronous dynamics).

  On the other hand, ${\star} 1$ is a control strategy by value propagation for $(\rd{f}, 0{\star})$ (\cref{fig:ex:prop}).

  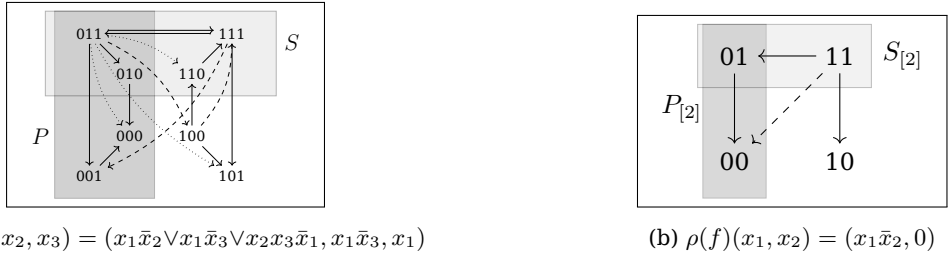
\begin{figure}
  \begin{subfigure}{0.5\textwidth}
  \begin{minipage}{\textwidth}
  \centering
  \fbox{
  \resizebox{3.8cm}{!}{
  \begin{tikzpicture}
  \node at (-0.6-\vd,0*\vd){\Large $P$};
  \filldraw[fill=black!60!white, draw=black, opacity=0.3] (-1.2*\vd,-\vd) rectangle (0.4*\vd,2*\vd);
  \filldraw[fill=black!20!white, draw=black, opacity=0.3] (-1.35*\vd,\td) rectangle (2.35*\vd,2*\vd);
  \node at (2.6*\vd,1.5*\vd){\Large $S$};

  \node (000) at (  0,  0){000};
  \node (100) at (\vd,  0){100};
  \node (010) at (  0,\vd){010};
  \node (110) at (\vd,\vd){110};

  \node (001) at (-\td,-\td){001};
  \node (101) at (\vd+\td,-\td){101};
  \node (011) at ( -\td,\vd+\td){011};
  \node (111) at (\vd+\td,\vd+\td){111};

  \path[->,draw,black]
  (001) edge (000)
  (010) edge (000)
  (011) edge (001)
  (011) edge[transform canvas={yshift=2pt}] (111)
  (011) edge (010)
  (100) edge (101)
  (100) edge (110)
  (110) edge (111)
  (111) edge (101)
  (111) edge[transform canvas={xshift=-2pt}] (011)

  (011) edge[bend left=20,dashed] (100)
  (100) edge[bend right=20,dashed] (111)
  (111) edge[bend left=20,dashed] (001)

  (011) edge[bend right=15,dotted] (101)
  (011) edge[bend left=15,dotted] (110)
  (011) edge[bend right=15,dotted] (000)
  ;
  \end{tikzpicture}
  }
  }
  \end{minipage}%
  \caption{$f(x_1,x_2,x_3) = (x_1 \bar{x}_2 {\vee} x_1 \bar{x}_3 {\vee} x_2 x_3 \bar{x}_1, x_1 \bar{x}_3, x_1)$}
  \end{subfigure}
  \begin{subfigure}{0.5\textwidth}
  \begin{minipage}{\textwidth}
  \centering
  \fbox{
  \begin{tikzpicture}
    \node at (-0.7,0.5*\vd){$P_{[2]}$};
  \filldraw[fill=black!60!white, draw=black, nearly transparent] (-0.3*\vd,-0.35*\vd) rectangle (0.3*\vd,1.3*\vd);
  \filldraw[fill=black!20!white, draw=black, nearly transparent] (-0.35*\vd,+0.7*\vd) rectangle (1.3*\vd,1.3*\vd);
    \node at (1.6*\vd,\vd){$S_{[2]}$};
  \node (00) at (0,0){00};
  \node (10) at (\vd,0){10};
  \node (01) at (0,\vd){01};
  \node (11) at (\vd,\vd){11};
  \path[->,draw,black]
  (01) edge (00)
  (11) edge (01)
  (11) edge (10)
  (11) edge[dashed] (00)
  ;
  \end{tikzpicture}
  }
  \end{minipage}%
  \caption{$\rd{f}(x_1,x_2) = (x_1 \bar{x}_2, 0)$}
  \end{subfigure}
  \caption{Figure for \cref{ex:prop}. $S_{[2]}$ is a control strategy by value propagation, while $S$ is not. $S$ is, however, an attractor-control strategy for asynchronous and general asynchronous dynamics and an MTS-control strategy.}
  \label{fig:ex:prop}
  \end{figure}
\end{example}

We now consider \cref{q2} and show that the existence of a control strategy by value propagation in a reduced network
implies the existence of a control strategy for the original network, which can be computed
from the former (\cref{thm:perc-in-red}).
As shown by the last example, the control strategy for the original network needs not be a
control strategy by value propagation.
The existence of a control strategy by value propagation for the original network is guaranteed however
if the component being eliminated is a mediator (\cref{thm:perc-in-red-2}).

\begin{theorem}\label{thm:perc-in-red}
  If $n$ is free in $P$ and there is a control strategy by value propagation for $(\rd{f}, P_{[n-1]})$, then
  there exists an MTS-control strategy and an attractor-control strategy for $(f,P)$ under dynamics $\AD$ and $\GAD$.
\end{theorem}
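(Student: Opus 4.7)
The plan is to take $S^\star$, the lift of the hypothesised value-propagation control strategy $S$ to $\B^n$ with component $n$ left free, as the candidate control strategy for $(f,P)$. Since $S$ fixes no component of $P_{[n-1]}$ and $P_n={\star}$, the subspace $S^\star$ fixes no component of $P$, so it is an admissible control strategy. Writing $f':=C(f,S^\star)$, \cref{prop:red-sub} applies (because $n$ is free in $S^\star$) and gives $\rd{f'}=C(\rd f,S)$, so the hypothesis becomes $\phi(\rd{f'})\subseteq P_{[n-1]}$. Because $P_n={\star}$, showing that every attractor of $\D(f')$ for $\D\in\{\AD,\GAD\}$ and every minimal trap space of $f'$ projects into $P_{[n-1]}$ is enough to conclude.

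For an attractor $A$ of $\D(f')$, \cref{lemma:trap-sets} gives that $A_{[n-1]}$ is a trap set of $\D(\rd{f'})$; I plan to promote it to an attractor by establishing strong connectivity, since a strongly connected trap set is automatically minimal. For $x,y\in A_{[n-1]}$, both representatives $\sigma(x),\sigma(y)$ belong to $A$ (as in the proof of \cref{lemma:no-2-attrs}), so strong connectivity of $A$ yields a $\D(f')$-path between them. I would convert this into a representative-to-representative walk in $A$ by inserting, after every non-representative state $z$, the single-component transition $z\to\bar z^n$: this is always a valid $\D(f')$-transition because $n$ is not autoregulated, which forces $f'_n(z)\neq z_n$ precisely when $z$ is non-representative, and it lands at the representative $\sigma(z_{[n-1]})\in A$. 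Projections of representative-to-representative segments are valid $\D(\rd{f'})$-transitions because $\rd{f'}_i(w_{[n-1]})=f'_i(w)$ when $w$ is representative. Strong connectivity together with being a trap set identifies $A_{[n-1]}$ as an attractor of $\D(\rd{f'})$, hence contained in $\phi(\rd{f'})\subseteq P_{[n-1]}$, and therefore $A\subseteq A^\star_{[n-1]}\subseteq P$.

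For a minimal trap space $T$ of $f'$, the aim is similarly to show that $T_{[n-1]}$ is a minimal trap space of $\rd{f'}$. The case $T_n\in\{0,1\}$ is \cref{prop:trap-spaces}(ii). When $T_n={\star}$, I would use that a minimal trap space coincides with an attractor in the most permissive dynamics of $f'$ and extend the representative-patching argument of the attractor case to that setting: the crucial ingredient, namely that one can always return to a representative by a single $x_n$-update, depends only on the non-autoregulation of $n$ and is therefore also available in MP. An alternative direct route is to suppose a strictly smaller trap space $T'\subsetneq T_{[n-1]}$ of $\rd{f'}$ and build the lift $T'^\star\cap T$ as a trap space of $f'$ strictly inside $T$, contradicting minimality; components fixed in $T$ inherit the trap condition, and components fixed in $T'$ only are handled via \cref{eq:def-red-2} on representative states, using $T_n={\star}$ to ensure that both the representative and its $n$-flip belong to $T$. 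Either way, $T_{[n-1]}$ is an MTS of $\rd{f'}$, hence contained in $\phi(\rd{f'})\subseteq P_{[n-1]}$, and $T\subseteq P$.

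The main technical obstacle in both parts is the discrepancy between representative and non-representative states: transitions of $\D(f')$ from non-representatives need not project to valid $\D(\rd{f'})$-transitions, and trap spaces of $\rd{f'}$ do not automatically lift to trap spaces of $f'$ once a target of $n$ is newly fixed. The unifying remedy is the always-available single-component $x_n$-update that returns any non-representative to its representative, and I expect the attractor case to go through essentially as sketched while the MTS case may require the MP viewpoint to cleanly bypass the lifting issue at targets of $n$.
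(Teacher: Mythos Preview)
Your proof has a fundamental gap: the candidate control strategy $S^\star$ need not work. \Cref{ex:free-in-P-no-CS-to-CS} is an explicit counterexample. There $S=\B^2$ is a control strategy by value propagation for $(\rd f,P_{[2]})$, but $S^\star=\B^3$ is neither an attractor-control strategy (for $\AD$ or $\GAD$) nor an MTS-control strategy for $(f,P)$: the unique asynchronous attractor of $f$ is $\{000,001,011,100,101,110\}$, which is not contained in $P={\star}0{\star}$, and the unique minimal trap space of $f$ is all of $\B^3$. The paper even remarks that the control strategies it produces in this example ($0{\star}{\star}$ and $1{\star}{\star}$) are ``in line with \cref{thm:perc-in-red}, despite their union not being a control strategy.'' So the theorem cannot be proved by simply lifting $S$; one must construct a \emph{strictly finer} subspace. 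The paper's proof does this by picking a state $y$ in an attractor of $C(\rd f,S)$ and letting $Y$ fix \emph{every} component outside $I\cup\{n\}$ to $y$'s value (where $I$ is the set of components fixed in $P$), then showing that all attractors of $C(f,Y)$ are fixed points lying in $P$.

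Both of your technical arguments fail for the same underlying reason. For attractors, your ``representative-patching'' does not produce a valid walk: after inserting $z\to\bar z^n$ for a non-representative $z$ on the path, you are now at $\bar z^n$, but the original path continued from $z$, and there is no transition from $\bar z^n$ to the next state in general. Concretely, in the counterexample above, $A_{[n-1]}=\B^2$ is a trap set of $\AD(\rd f)$ but is \emph{not} strongly connected (no path leaves $\{00,10\}$), so it is not an attractor. For minimal trap spaces, the same example shows $T=\B^3$ is the unique minimal trap space of $f'$ while $T_{[n-1]}=\B^2$ is not minimal for $\rd{f'}$ (the minimal one is ${\star}0$). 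Your ``alternative direct route'' of lifting $T'\subsetneq T_{[n-1]}$ to $T'^{\star}\cap T$ breaks precisely at components that are targets of $n$ and fixed in $T'$ but free in $T$: for such $i$ and a non-representative $x\in T'^{\star}\cap T$, $f_i(x)$ need not equal $f_i(\sigma(x_{[n-1]}))=\rd{f}_i(x_{[n-1]})=T'_i$, so $T'^{\star}\cap T$ need not be a trap space. This is exactly the obstruction that the mediator hypothesis removes in \cref{thm:min-ts} and \cref{thm:perc-in-red-2}, and without it your lifting argument cannot go through.
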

\begin{proof}
  By hypothesis, there exists $S \in \subs^{n-1}$ such that $\phi(C(\rd{f}, S)) \subseteq P_{[n-1]}$.
  Call $I$ the set of components fixed in $P$.
  Take any state $y$ in an attractor of $C(\rd{f}, S)$.
  Consider the subspace $Y$ defined as follows:
  $Y_i = {\star}$ for $i \in I \cup \{n\}$, $Y_i = y_i$ otherwise.
  In particular, for all $i$ fixed in $S$, $Y_i$ equals $S_i$ (so $Y_{[n-1]} \subseteq S$),
  and if a component $j \notin I$ is fixed in $\phi(C(\rd{f}, S))$,
  then it is fixed in $Y$ to its propagation value $\phi(C(\rd{f}, S))_j$.
  As a consequence, the subspace $\phi(C(\rd{f}, Y_{[n-1]}))$ is contained in $\phi(C(\rd{f}, S))$,
  and all the attractors of $C(\rd{f}, Y_{[n-1]})$ are contained in $P_{[n-1]}$.

  By \cref{lemma:trap-sets} and \cref{prop:red-sub}, under dynamics $\AD$ and $\GAD$,
  if $A$ is an attractor of $C(f, Y)$,
  there exists at least one attractor $A'$ for
  $\rd{C(f, Y)}=C(\rd{f}, Y_{[n-1]})$ in $A_{[n-1]}$ and,
  for all $u \in A'$, $\sigma(u)$ is in $A$.
  Since all attractors of $C(\rd{f}, Y_{[n-1]})$ are in $P_{[n-1]}$,
  this means that all attractors of $C(f, Y)$
  intersect with $P$.

  Consider any attractor $A$ of $C(f, Y)$ and take a state $z$ in $A \cap P$.
  In particular, $z$ is in $P \cap Y$, and for each $j$ fixed in $\phi(C(\rd{f}, S))$,
  $z_j$ equals $\phi(C(\rd{f}, S))_j$.

  Now take the representative state $w$ of $z$, which is also in $P \cap Y$ (since $P_n=Y_n={\star}$)
  and part of the same attractor.
  Recall that the only free components in $Y$ are $n$ and the components in $I$.
  Since $f_n(w) = w_n$ ($w$ is representative) and
  $f_i(w) = \rd{f}_i(w_{[n-1]}) = P_i = w_i$ for all $i \in I$ ($w_{[n-1]}$ is in $\phi(C(\rd{f}, S)) \subseteq P_{[n-1]}$),
  $w$ is a fixed point.
  Therefore $w$ and $z$ coincide, all the attractors of $C(f, Y)$ are fixed points,
  and all attractors and minimal trap spaces of $C(f, Y)$ are contained in $P$.
\end{proof}

\begin{theorem}\label{thm:perc-in-red-2}
  Consider a subspace $P$ with $P_n = {\star}$.
  If $S$ is a control strategy by value propagation for $(\rd{f}, P_{[n-1]})$
  and no regulator of $n$ regulates a target of $n$,
  then $S^{\star}$ a control strategy by value propagation for $(f,P)$.
\end{theorem}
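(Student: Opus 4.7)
The plan is to show that, thanks to the mediator condition, the percolations of $g = C(f, S^{\star})$ and of $\tilde{g} = \rd{g}$ stay in lockstep: projecting $\phi(g)$ onto the first $n-1$ coordinates cannot yield anything less constrained than $\phi(\tilde{g})$. First, since $n$ is free in $S^{\star}$, \cref{prop:red-sub} identifies $\tilde{g}$ with $C(\rd{f}, S)$, so the hypothesis becomes $\phi(\tilde{g}) \subseteq P_{[n-1]}$. By \cref{rmk:ig-subgraph}, the mediator property of $f$ is inherited by $g$. Since $P_n = {\star}$, the goal $\phi(g) \subseteq P$ reduces to $\phi(g)_{[n-1]} \subseteq \phi(\tilde{g})$.

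Writing $U_k = \Phi_g^k(\B^n)$ and $V_k = \Phi_{\tilde{g}}^k(\B^{n-1})$, I would prove by induction on $k$ that there exists $k' \geq k$ with $U_{k',[n-1]} \subseteq V_k$; since both percolations stabilise in at most $n$ steps, this gives $\phi(g)_{[n-1]} \subseteq \phi(\tilde{g})$. The base case $k=0$ is trivial. For the inductive step, take any $i \neq n$ newly fixed in $V_{k+1}$ to a value $a$---so $\tilde{g}_i$ is constant equal to $a$ on $V_k$, and by the inductive hypothesis also on $U_{k',[n-1]}$---and show that $g_i$ becomes constant equal to $a$ on $U_{k'}$ or on $U_{k'+1}$.

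The case analysis splits according to the status of $n$ in the trap space $U_{k'}$. If $n$ is fixed in $U_{k'}$, then since $g_n = f_n$ does not depend on $n$, every $z \in U_{k'}$ is its own representative, and $g_i(z) = \tilde{g}_i(z_{[n-1]}) = a$. If $n$ is free in $U_{k'}$ but $i$ is not a target of $n$ in $g$, then $g_i$ does not depend on $z_n$, and again $g_i(z) = \tilde{g}_i(z_{[n-1]}) = a$. The delicate case is when $n$ is free in $U_{k'}$ and $i$ is a target of $n$ in $g$: here I would apply \cref{lemma:regulators} to $g$ and the subspace $U_{k'}$. If both $g_i$ and $g_n$ were non-constant on $U_{k'}$, the lemma would yield a common regulator of $n$ and $i$ in $g$, contradicting the (inherited) mediator condition. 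So either $g_i$ is already constant on $U_{k'}$, or $g_n$ is constant on $U_{k'}$; in the latter case $n$ becomes fixed in $U_{k'+1}$, reducing to the first case and fixing $i$ by step $k'+2$.

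The main obstacle is exactly this possible one-step delay: the cleaner invariant $U_{k,[n-1]} \subseteq V_k$ can fail (already $f(x_1,x_2)=(x_2,0)$ exhibits a reduction that percolates in one step while the original needs two), so the inductive statement must allow $k'$ to grow at most like $2k$. The mediator condition, channelled through \cref{lemma:regulators}, is precisely what rules out the only scenario in which $g$-percolation could lag behind $\tilde{g}$-percolation in a way that would let the fixed values escape $P$.
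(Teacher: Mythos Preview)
Your argument is correct, and it uses the same two ingredients as the paper (\cref{lemma:regulators} together with the inherited mediator property via \cref{rmk:ig-subgraph}), but your route is more laborious than necessary. The paper does not follow the percolation step by step. Instead it sets $Z=\phi(g)$ and $Y=\phi(\tilde g)$, notes that $Y\subseteq Z_{[n-1]}$ by \cref{lemma:propagation}, and argues by contradiction directly on the \emph{stabilised} subspace $Z$: if $Y\subsetneq Z_{[n-1]}$, then since $Z_{[n-1]}$ is a trap space for $\tilde g$ one finds some $i$ with $\tilde g_i$ constant on $Z_{[n-1]}$ but $g_i$ not constant on $Z$ (because $\Phi_g(Z)=Z$); moreover $Z_n=\star$ and $g_n$ is not constant on $Z$ (else $\Phi_g(Z)\subsetneq Z$), so a single application of \cref{lemma:regulators} to $g$ and $Z$ yields the forbidden common regulator.

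The advantage of the paper's approach is that working at the fixed point $Z$ makes the ``one-step delay'' you worry about disappear entirely: any lag between the two percolations has already been absorbed, so no inductive bookkeeping on $k$ versus $k'$ is needed, and the case analysis collapses. Your inductive argument buys nothing extra here, though it would be the natural thing to try if one also wanted quantitative control on \emph{how many} extra percolation steps $g$ needs compared to $\tilde g$.
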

\begin{proof}
  Set $g = C(f, S^{\star})$. Then by \cref{prop:red-sub} we have $\rd{g} = C(\rd{f}, S)$.
  Define $Y = \phi(C(\rd{f}, S)) = \phi(\rd{g}) \subseteq P_{[n-1]}$ and $Z = \phi(C(f, S^{\star})) = \phi(g)$.

  By \cref{lemma:propagation}, $Y$ is contained in $Z_{[n-1]}$.
  We assume that the subspace $Y$ is strictly smaller than $Z_{[n-1]}$
  and show that there is a regulator of $n$ that regulates a target of $n$.

  By definition of $Z$, we have $\Phi_g(Z) = Z$.
  On the other hand, $\Phi_{\rd{g}}(Z_{[n-1]})$ is strictly contained in $Z_{[n-1]}$,
  meaning that there exists $i$ fixed in $Y$ and not in $Z$
  such that $\rd{g}_i$ is constant on $Z_{[n-1]}$
  and $g_i$ is not constant on $Z$.
  This means in particular that $Z_n$ is not fixed.

  If $g_n$ is constant on $Z$, then $\Phi_g(Z)$ is strictly contained in $Z$, a contradiction.

  Therefore, we can conclude by applying \cref{lemma:regulators} on $g$ and the subspace $Z$
  and invoking \cref{rmk:ig-subgraph}.
\end{proof}

\subsection{Counterexamples, elimination of components fixed in the phenotype}

In the following we show that \cref{q1} and \cref{q2} posed at the start of the section
have negative answers in all the cases not covered by the previous results,
for the removal of a component fixed in the phenotype (\cref{table:summary-fixed}).

\subsubsection{The projection of a control strategy is not a control strategy for the reduction}

\begin{example}\label{ex:fixed-in-P-1}
  Consider first \cref{q1}.
  Take the Boolean network in \cref{fig:ex-all} (a),
  with target phenotype $P = 0{\star} 0$.
  Since the fixed point $000$ is the unique attractor of $f$ (in all state transition graphs),
  the full space $S = {\star}{\star}{\star}$ is an attractor-control strategy under all dynamics
  and an MTS-control strategy for $(f, P)$.

  Now consider the elimination of the third component (\cref{fig:ex-all} (b)).
  The target phenotype becomes $P_{[2]}=0{\star}$.
  The state transition graphs of $\rd{f}$ now admit two minimal trap spaces and two attractors,
  and applying the (trivial) control $S_{[2]}={\star}{\star}$ does not guarantee
  that both minimal trap spaces and both attractors fall in the target phenotype.
  The other possible subspaces ${\star}0$ and ${\star}1$ are also not control strategies.

  Observe that $S$ is not a control strategy by value propagation
  (we saw in \cref{thm:propagation} that
  control strategies by propagation behave well under elimination of components,
  even when the component being eliminated is fixed in the target phenotype).
\end{example}

\begin{example}\label{ex:fixed-in-P-1-mediator}
  We can reconsider \cref{q1} with the additional condition that $n$ is a linear mediator node.

  Take the network $f(x_1, x_2, x_3, x_4) = (x_3 {\vee} x_1 x_2 {\vee} \bar x_1 \bar x_2, x_4 {\vee}
  x_2 \bar x_1, x_3 \bar x_1 {\vee} \bar x_1 \bar x_2, x_3)$
  which reduces to the network $\rd{f}(x_1, x_2, x_3) = (x_3 {\vee} x_1 x_2 {\vee} \bar x_1 \bar x_2, x_3 {\vee} x_2 \bar x_1, x_3 \bar x_1 {\vee} \bar x_1 \bar x_2)$
  (\cref{fig:ex:fixed-in-P-1-mediator}).
  All dynamics of $f$ have only one attractor, the fixed point $0100$,
  while the asynchronous, synchronous and general asynchronous dynamics of $\rd{f}$
  have an additional attractor.

  The attractor of $f$ is contained in the subspace $P=0{\star}{\star}0$.
  However, the reduction $\rd{f}$ does not admit any attractor-control strategy for $P_{[n-1]}=0{\star}{\star}$.

  On the other hand, both $f$ and $\rd{f}$ have only one minimal trap space, contained in the phenotype:
  by \cref{thm:min-ts-control}, in the case of removal of a mediator node,
  the existence of an MTS-control strategy for $f$ guarantees the existence
  of an MTS-control strategy in the reduced network.

  \begin{figure}[t]
  \begin{subfigure}{0.55\textwidth}
  \centering
  \begin{minipage}{0.8\textwidth}
  \centering
  \fbox{\resizebox{\textwidth}{!}{
  \begin{tikzpicture}
  \filldraw[fill=black!60!white, draw=black, opacity=0.3] (-1.5*\vd,-1.2*\vd) rectangle (0.4*\vd,2.2*\vd);
  \node (0000) at (  0,  0){0000};
  \node (1000) at (\vd,  0){1000};
  \node (0010) at (  0,\vd){0010};
  \node (1010) at (\vd,\vd){1010};

  \node (0100) at ( -\vd, -\vd){0100};
  \node (1100) at (2*\vd, -\vd){1100};
  \node (0110) at ( -\vd,2*\vd){0110};
  \node (1110) at (2*\vd,2*\vd){1110};

  \node (0001) at (  0+\sd,  0){0001};
  \node (1001) at (\vd+\sd,  0){1001};
  \node (0011) at (  0+\sd,\vd){0011};
  \node (1011) at (\vd+\sd,\vd){1011};

  \node (0101) at ( -\vd+\sd, -\vd){0101};
  \node (1101) at (2*\vd+\sd, -\vd){1101};
  \node (0111) at ( -\vd+\sd,2*\vd){0111};
  \node (1111) at (2*\vd+\sd,2*\vd){1111};

  \path[->,draw,black]
  (0000) edge (0010)
  (0000) edge[transform canvas={yshift=1.5pt}] (1000)
  (0001) edge (0011)
  (0001) edge (0101)
  (0001) edge[transform canvas={yshift=1.5pt}] (1001)
  (0001) edge[bend left=15] (0000)
  (0010) edge[bend left=15] (0011)
  (0010) edge (1010)
  (0011) edge (0111)
  (0011) edge (1011)
  (0101) edge[bend left=15] (0100)
  (0110) edge (1110)
  (0110) edge[bend left=15] (0111)
  (0111) edge (1111)
  (1000) edge[transform canvas={yshift=-1.5pt}] (0000)
  (1001) edge[transform canvas={yshift=-1.5pt}] (0001)
  (1001) edge (1101)
  (1001) edge[bend left=15] (1000)
  (1010) edge (1000)
  (1010) edge[bend left=15] (1011)
  (1011) edge (1001)
  (1011) edge (1111)
  (1100) edge (1000)
  (1101) edge[bend left=15] (1100)
  (1110) edge (1100)
  (1110) edge[bend left=15] (1111)
  (1110) edge (1010)
  (1111) edge (1101)
  (0000) edge[bend left=15,dashed] (1010)
  (1010) edge[bend left=15,dashed] (1001)
  (0001) edge[bend left=15,dashed] (1110)
  (1110) edge[bend left=15,dashed] (1001)
  (0010) edge[bend left=15,dashed] (1011)
  (1011) edge[bend left=15,dashed] (1101)
  (0011) edge[bend left=15,dashed] (1111)
  (0110) edge[bend left=15,dashed] (1111)
  (1001) edge[bend left=15,dashed] (0100)
  (0001) edge[bend left=15,dotted] (0010)
  (0001) edge[bend left=15,dotted] (0111)
  (0001) edge[bend left=15,dotted] (1011)
  (0001) edge[bend left=15,dotted] (1100)
  (0001) edge[bend left=15,dotted] (0100)
  (0001) edge[bend left=15,dotted] (0110)
  (0001) edge[bend left=15,dotted] (1000)
  (0001) edge[bend left=15,dotted] (1010)
  (0001) edge[bend left=15,dotted] (1101)
  (0001) edge[bend left=15,dotted] (1111)
  (1001) edge[bend left=15,dotted] (0101)
  (1001) edge[bend left=15,dotted] (1100)
  (1001) edge[bend left=15,dotted] (0000)
  (1110) edge[bend left=15,dotted] (1011)
  (1110) edge[bend left=15,dotted] (1101)
  (1110) edge[bend left=15,dotted] (1000)
  ;
  \end{tikzpicture}
  }}
  \end{minipage}%
  \caption{$(x_3 {\vee} x_1 x_2 {\vee} \bar x_1 \bar x_2, x_4 {\vee} x_2 \bar x_1, x_3 \bar x_1 {\vee} \bar x_1 \bar x_2, x_3)$}
  \end{subfigure}\hfill
  \begin{subfigure}{0.4\textwidth}
  \centering
  \begin{minipage}{0.8\textwidth}
  \centering
  \fbox{
  \resizebox{0.65\textwidth}{!}{
  \begin{tikzpicture}
  \filldraw[fill=black!60!white, draw=black, opacity=0.3] (-1.5*\vd,-1.2*\vd) rectangle (0.4*\vd,2.2*\vd);
  \node (000) at (  0,  0){000};
  \node (100) at (\vd,  0){100};
  \node (001) at (  0,\vd){001};
  \node (101) at (\vd,\vd){101};

  \node (010) at ( -\vd, -\vd){010};
  \node (110) at (2*\vd, -\vd){110};
  \node (011) at ( -\vd,2*\vd){011};
  \node (111) at (2*\vd,2*\vd){111};

  \path[->,draw,black]
  (000) edge[transform canvas={yshift=1.5pt}] (100)
  (000) edge (001)
  (001) edge (101)
  (001) edge (011)
  (011) edge (111)
  (100) edge[transform canvas={yshift=-1.5pt}] (000)
  (101) edge (100)
  (101) edge (111)
  (110) edge (100)
  (111) edge (110)
  (000) edge[dashed] (101)
  (101) edge[dashed] (110)
  (001) edge[dashed] (111)
  ;
  \end{tikzpicture}
  }}
  \end{minipage}%
  \caption{$(x_3 {\vee} x_1 x_2 {\vee} \bar x_1 \bar x_2, x_3 {\vee} x_2 \bar x_1, x_3 \bar x_1 {\vee} \bar x_1 \bar x_2)$}
  \end{subfigure}
  \caption{The network on the left has only one attractor, the fixed point 0100.
  The full space is an attractor- and MTS-control strategy for $P=0{\star}{\star}0$.
  The reduced network, on the right, has an additional attractor in the asynchronous, synchronous and general asynchronous dynamics,
  and no attractor-control strategy.}\label{fig:ex:fixed-in-P-1-mediator}
  \end{figure}
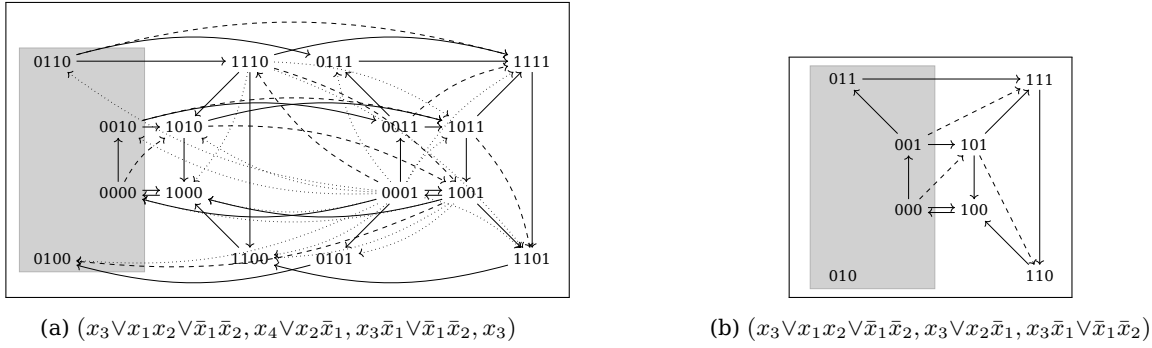
\end{example}

\subsubsection{New control strategies in reduced networks}

\begin{example}\label{ex:fixed-in-P-2}
  Consider \cref{q2}, and again the Boolean network in \cref{fig:ex-all}.
  This time take $P = *01$.
  To find a control strategy, we can consider three possible subspaces:
  ${\star}{\star}{\star}$, $0{\star}{\star}$ and $1{\star}{\star}$.
  The first is clearly not a control strategy, since the unique attractor $000$
  is outside $P$.
  The attractors of the state transition graphs defined by $C(f,0{\star}{\star})$
  and $C(f,1{\star}{\star})$ are also not contained in $P$.
  On the other hand, $S = 0{\star}$ is an attractor-control strategy,
  an MTS-control strategy and a control strategy by propagation for $(\rd{f}, P_{[2]}={\star} 0)$.

  For a minimal example, we could take the simple network $f(x_1, x_2) = (0, 0)$.
  Clearly, no control strategy exists if we consider $P = {\star} 1$.
  On the other hand, $\rd{f} = 0$ and $P_{[1]} = {\star}$,
  so that the full state space $S = {\star}$ is trivially a control strategy
  (it is an attractor-control strategy, an MTS-control strategy and a control strategy by propagation).
\end{example}

\begin{example}\label{ex:fixed-in-P-2-mediator}
  For an example where the component being removed is a linear mediator component, consider
  $f(x_1,x_2,x_3) = (x_2 x_3, 0, x_1)$ and $P=0{\star}1$ (\cref{fig:ex-fixed-P-2-mediator}).
  Without fixing any component, the reduced network is controlled to $P_{[2]} = 0{\star}$;
  on the other hand, there are no control strategies for $P$ in the original network.

  \begin{figure}
  \begin{subfigure}{0.5\textwidth}
  \begin{minipage}{\textwidth}
  \centering
  \fbox{
  \resizebox{3.8cm}{!}{
  \begin{tikzpicture}
  \filldraw[fill=black!60!white, draw=black, opacity=0.3] (-1.2*\vd,-\vd) rectangle (-0.3*\vd,2*\vd);

  \node (000) at (  0,  0){000};
  \node (100) at (\vd,  0){100};
  \node (010) at (  0,\vd){010};
  \node (110) at (\vd,\vd){110};

  \node (001) at (-\td,-\td){001};
  \node (101) at (\vd+\td,-\td){101};
  \node (011) at ( -\td,\vd+\td){011};
  \node (111) at (\vd+\td,\vd+\td){111};

  \path[->,draw,black]
  (001) edge (000)
  (010) edge (000)
  (011) edge (001)
  (011) edge (111)
  (011) edge (010)
  (100) edge (000)
  (100) edge (101)
  (101) edge (001)
  (110) edge (100)
  (110) edge (111)
  (110) edge (010)
  (111) edge (101)

  (011) edge[bend left=15,dashed] (100)
  (100) edge[bend left=15,dashed] (001)
  (110) edge[bend left=15,dashed] (001)

  (011) edge[bend left=25,dotted] (101)
  (011) edge[bend left=15,dotted] (110)
  (011) edge[bend right=15,dotted] (000)
  (110) edge[bend left=15,dotted] (101)
  (110) edge[dotted] (000)
  (110) edge[bend left=15,dotted] (011)
  ;
  \end{tikzpicture}
  }
  }
  \end{minipage}%
  \caption{$f(x_1,x_2,x_3) = (x_2x_3, 0, x_1)$}
  \end{subfigure}
  \begin{subfigure}{0.5\textwidth}
  \begin{minipage}{\textwidth}
  \centering
  \fbox{
  \begin{tikzpicture}
  \filldraw[fill=black!60!white, draw=black, opacity=0.3] (-0.3*\vd,-0.35*\vd) rectangle (0.3*\vd,1.3*\vd);
  \node (00) at (0,0){00};
  \node (10) at (\vd,0){10};
  \node (01) at (0,\vd){01};
  \node (11) at (\vd,\vd){11};
  \path[->,draw,black]
  (01) edge (00)
  (10) edge (00)
  (11) edge (10)
  ;
  \end{tikzpicture}
  }
  \end{minipage}%
  \caption{$\rd{f}(x_1,x_2) = (x_1 x_2, 0)$}
  \end{subfigure}
  \caption{$\B^2$ is a control strategy for $(\rd{f},P_{[2]}=0{\star})$ (under all definitions considered here) and there are no control strategies for $(f, P=0{\star}1)$.}
  \label{fig:ex-fixed-P-2-mediator}
  \end{figure}
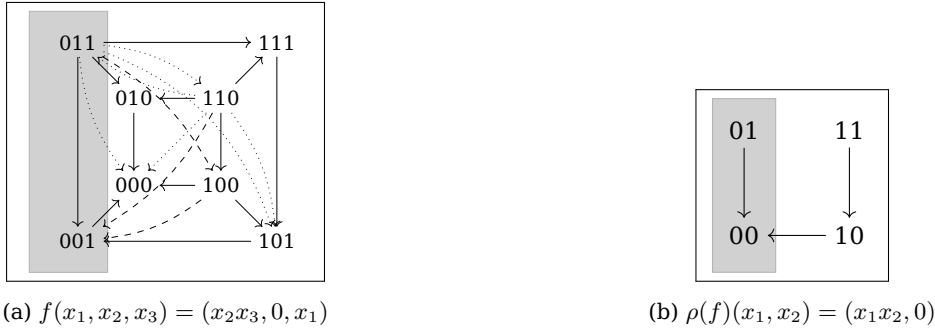
\end{example}

\subsection{Counterexamples, elimination of components not fixed in the phenotype}

The examples in this section cover the negative cases in \cref{table:summary-not-fixed}.

\subsubsection{The projection of a control strategy is not a control strategy for the reduction}

If $S$ is an MTS-control strategy for $(f,P)$,
then $S_{[n-1]}$ is an not necessarily an MTS-control strategy for $(\rd{f}, P_{[n-1]})$.

\begin{example}\label{ex:free-in-P-CS-to-no-CS}
  Take again the Boolean network in \cref{fig:ex-all},
  this time with $P = 0{\star}{\star}$.
  Clearly, $S = {\star}{\star}{\star}$ is an MTS-control strategy and an attractor-control strategy in all dynamics for $(f, P)$,
  but there are no control strategies for $(\rd{f}, P_{[2]} = 0{\star})$.
\end{example}

\begin{example}\label{ex:free-in-P-CS-to-no-CS-mediator}
  For the network in \cref{fig:ex:fixed-in-P-1-mediator} (a), where $n=4$ is a linear mediator component,
  $S = {\star}{\star}{\star}{\star}$ is an attractor-control strategy for $P=0{\star}{\star}{\star}$ in $\SD(f)$, $\AD(f)$ and $\GAD(f)$,
  as well as an MTS-control strategy (0100 is the unique attractor).

  For the network in \cref{fig:ex:fixed-in-P-1-mediator} (b) obtained by eliminating the last component,
  $S = {\star}{\star}{\star}$ is an MTS-control strategy for $P_{[3]}=0{\star}{\star}$ (as guaranteed by \cref{thm:min-ts-control}),
  but not an attractor-control strategy in $\SD(f)$, $\AD(f)$ or $\GAD(f)$.
  One can verify that there are no attractor control strategies for $(\rd{f}, P_{[3]})$.
\end{example}

\subsubsection{New control strategies in reduced networks}

Here we show that, if $S$ is a control strategy for $(\rd{f}, P_{[n-1]})$,
then the subspace $S^{\star}$ is not necessarily a control strategy for $(f, P)$.
The idea is that an attractor or minimal trap space that
in the original network is not fully
contained in $P$ might get reduced to one that is contained in $P_{[n-1]}$.
We first look at an example in dimension 3.

\begin{example}\label{ex:free-in-P-no-CS-to-CS}
Consider the map with dynamics represented in \cref{fig:ex-f} left,
and its reduction after the elimination of the third component, on the right.
Take $P = {\star} 0 {\star}$.
Then $P_{[2]} = {\star} 0$, and $\B^2$ is a control strategy by value propagation
for $(\rd{f}, P_{[2]})$.
However, $\B^3$ is not a control strategy by value propagation for $(f, P)$,
nor an attractor- or MTS-control strategy.
The subspaces ${\star} {\star} 0$ and ${\star} {\star} 1$ also do not define
control strategies for $(f, P)$.

Note that $0{\star}{\star}$ and $1{\star}{\star}$ are attractor-control strategies for $\AD(f)$ and $\GAD(f)$,
as well as MTS-control strategy for $(f, P)$, in line with \cref{thm:perc-in-red},
despite their union not being a control strategy.
They are not control strategies by value propagation or attractor-control strategies for $\SD(f)$.
\end{example}

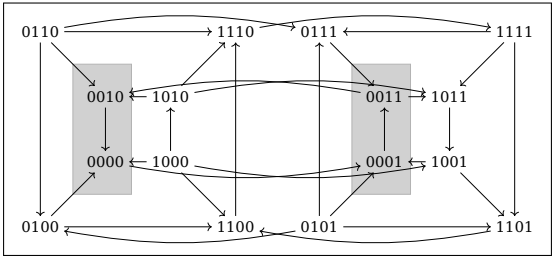
\begin{figure}
\begin{subfigure}{0.59\textwidth}
\begin{minipage}{\textwidth}
\center
\resizebox{4cm}{!}{
\fbox{
  \begin{tikzpicture}
  \filldraw[fill=black!60!white, draw=black, opacity=0.3] (-1.0*\vd,-0.9*\vd) rectangle (2.0*\vd,0.3*\vd);
  \node (000) at (  0,  0){000};
  \node (100) at (\vd,  0){100};
  \node (010) at (  0,\vd){010};
  \node (110) at (\vd,\vd){110};

  \node (001) at (-\td,-\td){001};
  \node (101) at (\vd+\td,-\td){101};
  \node (011) at ( -\td,\vd+\td){011};
  \node (111) at (\vd+\td,\vd+\td){111};

  \path[->,draw,black]
  (000) edge[transform canvas={yshift=2pt}] (100)
  (001) edge[transform canvas={yshift=2pt}] (101)
  (001) edge (000)
  (001) edge[transform canvas={xshift=-2pt}] (011)
  (010) edge (000)
  (010) edge (110)
  (010) edge (011)
  (011) edge[transform canvas={xshift=2pt}] (001)
  (100) edge[transform canvas={yshift=-2pt}] (000)
  (100) edge[transform canvas={xshift=2pt}] (110)
  (100) edge (101)
  (101) edge[transform canvas={yshift=-2pt}] (001)
  (110) edge[transform canvas={xshift=-2pt}] (100)
  (111) edge (101)
  (111) edge (110)
  (111) edge (011)

  (001) edge[dotted,bend right=15] (010)
  (001) edge[dotted,bend right=15] (100)
  (001) edge[dotted,bend right=15] (111)

  (010) edge[dotted,bend right=15] (100)
  (010) edge[dotted,bend right=15] (001)
  (010) edge[dotted,bend right=15] (111)

  (100) edge[dotted,bend right=15] (010)
  (100) edge[dotted,bend right=15] (001)
  (100) edge[dotted,bend right=15] (111)

  (111) edge[dotted,bend right=15] (010)
  (111) edge[dotted,bend right=15] (100)
  (111) edge[dotted,bend right=15] (001)

  (100) edge[dashed,bend right=20] (011)
  (001) edge[dashed,bend right=20] (110)
  (010) edge[dashed,bend right=20] (101)
  (111) edge[dashed,bend right=20] (000)
  ;
  \end{tikzpicture}
}}
\end{minipage}%
\caption{
    $f(x_1,x_2,x_3) =
    (\bar{x}_1 \bar{x}_2 {\vee} \bar{x}_1 \bar{x}_3 {\vee} x_2 \bar{x}_3,
    x_1 \bar{x}_2 \bar{x}_3 {\vee} \bar{x}_1 \bar{x}_2 x_3,
    x_1 \bar{x}_2 {\vee} \bar{x}_1 x_2)$
}
\end{subfigure}
\begin{subfigure}{0.40\textwidth}
\begin{minipage}{\textwidth}
\center
\fbox{
  \begin{tikzpicture}
  \filldraw[fill=black!60!white, draw=black, opacity=0.3] (-0.3*\vd,-0.3*\vd) rectangle (1.3*\vd,0.3*\vd);
  \node (00) at (0,0){00};
  \node (10) at (\vd,0){10};
  \node (01) at (0,\vd){01};
  \node (11) at (\vd,\vd){11};
  \path[->,draw,black]
  (00) edge[transform canvas={yshift=+2pt}] (10)
  (01) edge[transform canvas={yshift=-2pt}] (00)
  (10) edge (00)
  (11) edge (10)
  ;
  \end{tikzpicture}
}
\end{minipage}
\caption{$\rd{f}(x_1,x_2) = (x_1x_2 {\vee} \bar{x}_1\bar{x}_2, 0)$}
\end{subfigure}
\caption{The Boolean network on the left reduces to the one on the right
by elimination of the third component.
The dotted transitions are part of the asynchronous and generalized asynchronous dynamics.
The dashed transitions are found in the synchronous and in the generalized asynchronous dynamics.
All other transitions are common to all dynamics.
${\star}{\star}$ is a control strategy by value propagation for $(\rd{f}, {\star}0)$,
while ${\star}{\star}{\star}$ is not a control strategy for $(f, {\star}0{\star})$.}\label{fig:ex-f}
\end{figure}

We have seen in \cref{thm:perc-in-red} that the existence of a control strategy by value propagation
in the reduced network guarantees the existence of an attractor-control strategy and MTS-control strategy
for the original network.
In the following examples, the reduced network admits an MTS-control strategy which is also an attractor-control strategy
for $\AD(f)$ and $\GAD(f)$; on the other hand, no control strategy exists for $f$.

\begin{example}\label{ex:no-CS-to-CS-not-fixed}
Here we consider a map with 4 components, as in \cref{fig:ex-no}, where again the last component is removed.
For clarity, \cref{fig:ex-no} (a) and (c) only show the asynchronous dynamics,
but the observations also apply to the general asynchronous dynamics.

Take $P=00{\star}{\star}$ as target, which becomes $P_{[3]}=00{\star}$ in the reduction.
$P_{[3]}$ coincides with the unique attractor and the unique minimal trap space of $\rd{f}$,
so $S={\star}{\star}{\star}=\B^3$ is an attractor-control strategy and MTS-control strategy for $(\rd{f},P_{[3]})$.
Observe that $S$ is not a control strategy by value propagation.

On the other hand, it can be verified that no attractor-control and no MTS-control strategy exist for $(f, P)$.
${\star}{\star}{\star}{\star}$ is not a control strategy, because it is the unique minimal trap space,
and the unique attractor has states outside $P$.
The other subspaces to consider, and the attractors they generate, are as in \cref{fig:ex-no} (b).

\begin{figure}[t]
\begin{subfigure}{0.45\textwidth}
\begin{minipage}{0.95\textwidth}
\fbox{\resizebox{\textwidth}{!}{
\begin{tikzpicture}
\filldraw[fill=black!60!white, draw=black, opacity=0.3] (-0.5*\vd,-0.5*\vd) rectangle (0.4*\vd,1.5*\vd);
\filldraw[fill=black!60!white, draw=black, opacity=0.3] (-0.5*\vd+\sd,-0.5*\vd) rectangle (0.4*\vd+\sd,1.5*\vd);
\node (0000) at (  0,  0){0000};
\node (1000) at (\vd,  0){1000};
\node (0010) at (  0,\vd){0010};
\node (1010) at (\vd,\vd){1010};

\node (0100) at ( -\vd, -\vd){0100};
\node (1100) at (2*\vd, -\vd){1100};
\node (0110) at ( -\vd,2*\vd){0110};
\node (1110) at (2*\vd,2*\vd){1110};

\node (0001) at (  0+\sd,  0){0001};
\node (1001) at (\vd+\sd,  0){1001};
\node (0011) at (  0+\sd,\vd){0011};
\node (1011) at (\vd+\sd,\vd){1011};

\node (0101) at ( -\vd+\sd, -\vd){0101};
\node (1101) at (2*\vd+\sd, -\vd){1101};
\node (0111) at ( -\vd+\sd,2*\vd){0111};
\node (1111) at (2*\vd+\sd,2*\vd){1111};

\path[->,draw,black]
(0000) edge[bend right=10] (0001)
(0001) edge (0011)
(0010) edge (0000)
(0011) edge[bend right=10] (0010)
(0011) edge (1011)
(0100) edge (1100)
(0100) edge (0000)
(0101) edge (0001)
(0101) edge[bend left=10] (0100)
(0101) edge (1101)
(0101) edge (0111)
(0110) edge (0100)
(0110) edge (0010)
(0110) edge (1110)
(0110) edge[bend left=10] (0111)
(0111) edge (0011)
(1000) edge (1100)
(1000) edge (1010)
(1000) edge[bend right=10] (1001)
(1000) edge (0000)
(1001) edge (0001)
(1001) edge (1101)
(1010) edge (0010)
(1010) edge (1110)
(1010) edge[bend left=10] (1011)
(1011) edge (1001)
(1100) edge (1110)
(1101) edge[bend left=10] (1100)
(1110) edge[bend left=10] (1111)
(1111) edge (1101)
(1111) edge (0111)
(1111) edge (1011)
;
\end{tikzpicture}
}}
\end{minipage}%
\caption{$(x_2  \bar{x}_3 {\vee} x_2  \bar{x}_4 {\vee} x_3  x_4  \bar{x}_2,
          x_1  \bar{x}_3 {\vee} x_1  \bar{x}_4$,
          $x_1  \bar{x}_4 {\vee} x_4  \bar{x}_1,
          x_2  x_3 {\vee} x_1  \bar{x}_2 {\vee} \bar{x}_2  \bar{x}_3)$}
\end{subfigure}
\begin{subfigure}{0.24\textwidth}
\center
\begin{minipage}{\textwidth}
  \resizebox{0.90\textwidth}{!}{
  \begin{tabular}{|c|c|}
    \hline
    subspace & \makecell{attractors} \\
    \hline
    ${\star}{\star}{\star}{\star}$ & \multicolumn{1}{c|}{\makecell{${\star}{\star}{\star}{\star}\setminus\{0100,0101,$\\$0110,1000,1010\}$}} \\
    \hline
    ${\star}{\star}{\star}0$ & \multicolumn{1}{c|}{$\{000\}, \{111\}$} \\
    \hline
    ${\star}{\star}{\star}1$ & \multicolumn{1}{c|}{$\{110\}$} \\
    \hline
    ${\star}{\star}0{\star}$ & \multicolumn{1}{c|}{$\{001\}, \{110\}$} \\
    \hline
    ${\star}{\star}00$       & \multicolumn{1}{c|}{$\{00\}, \{11\}$} \\
    \hline
    ${\star}{\star}01$       & \multicolumn{1}{c|}{$\{00\}, \{11\}$} \\
    \hline
    ${\star}{\star}1{\star}$ & \multicolumn{1}{c|}{$\{000\}, \{101\}$} \\
    \hline
    ${\star}{\star}10$       & \multicolumn{1}{c|}{$\{00\}, \{11\}$} \\
    \hline
    ${\star}{\star}11$       & \multicolumn{1}{c|}{$\{10\}$} \\
    \hline
  \end{tabular}}
\end{minipage}%
\caption{Subspaces and attractors induced.}
\end{subfigure}
\begin{subfigure}{0.3\textwidth}
\begin{minipage}{\textwidth}
\centering
\fbox{
\resizebox{0.65\textwidth}{!}{
\begin{tikzpicture}
\filldraw[fill=black!60!white, draw=black, opacity=0.3] (-0.5*\vd,-0.5*\vd) rectangle (0.4*\vd,1.5*\vd);
\node (000) at (  0,  0){000};
\node (100) at (\vd,  0){100};
\node (001) at (  0,\vd){001};
\node (101) at (\vd,\vd){101};

\node (010) at ( -\vd, -\vd){010};
\node (110) at (2*\vd, -\vd){110};
\node (011) at ( -\vd,2*\vd){011};
\node (111) at (2*\vd,2*\vd){111};

\path[->,draw,black]
(000) edge[transform canvas={xshift=2pt}] (001)
(001) edge[transform canvas={xshift=-2pt}] (000)
(010) edge (000)
(010) edge (110)
(011) edge (001)
(100) edge (000)
(100) edge (110)
(101) edge (100)
(110) edge[transform canvas={xshift=2pt}] (111)
(111) edge (101)
(111) edge[transform canvas={xshift=-2pt}] (110)
(111) edge (011)
;
\end{tikzpicture}
}}
\end{minipage}%
\caption{$(x_2  \bar{x}_3 {\vee} x_1 x_3 \bar{x}_2, x_1 \bar{x}_3,$\\
           $x_1  x_2  \bar{x}_3 {\vee} x_2  x_3  \bar{x}_1 {\vee} \bar{x}_1  \bar{x}_2  \bar{x}_3)$}
\end{subfigure}
\caption{(a) Asynchronous dynamics of a Boolean network.
(b) Subpaces that can be considered as candidate control strategies for target $00{\star}{\star}$,
and attractors and minimal trap spaces obtained.
(c) Asynchronous dynamics of the Boolean network obtained from the network in (a) by elimination of the fourth component.}\label{fig:ex-no}
\end{figure}
\end{example}

\begin{example}\label{ex:new-cs-mediator}
  Attractor-control strategies can be introduced in the reduction
  under the hypotheses of \cref{thm:min-ts-control}, even when linear mediator components are removed.

  For asynchronous dynamics, take the network
  \[f(x_1,x_2,x_3,x_4) = (x_1 x_2 {\vee} x_1 \bar x_3 {\vee} x_3 \bar x_1 \bar x_2, \bar x_2 \bar x_3 {\vee} x_2 x_3 \bar x_1, \bar x_3 \bar x_4, \bar x_2),\]
  with $P = 0{\star}{\star}{\star}$.
  For general asynchronous, with $P=0{\star}{\star}{\star}{\star}$, a counterexample is given by the network
  \[f(x_1,x_2,x_3,x_4,x_5) = (x_1 x_3{\vee}x_1 \bar x_4{\vee}x_4 \bar x_1 \bar x_3, \bar x_2, x_2{\vee}\bar x_5, x_1 x_2 x_3{\vee}x_1 x_3 \bar x_4{\vee}x_2 x_3 \bar x_4, \bar x_4),\]
  and for synchronous with $P=0{\star}{\star}$, by the network
  \[f(x_1, x_2, x_3) = (x_1 \bar x_2 {\vee} x_2 \bar x_1, \bar x_2 \bar x_3, \bar x_1).\]
\end{example}

\section{Conclusion}

We performed an extensive analysis of the relationship between phenotype control and Boolean network
reduction by component elimination.
We provided examples that clarify that component elimination can disrupt control in most situations.
We also observed that this reduction technique behaves better in relation to control strategies
that work independently of the update scheme.
In particular, we proved that, if the values fixed by the control strategy propagate
through the network until the phenotype subspace is reached,
then the same control strategy works in the reduced network (\cref{thm:propagation}).
Vice versa, if a control strategy by value propagation exists in a reduced network,
under the appropriate conditions (component being removed not fixed in the phenotype)
a control strategy exists for the original network, although it might not necessarily
work by propagating the fixed values (\cref{thm:perc-in-red,thm:perc-in-red-2}).
In addition, we considered the elimination of components under stricter conditions, that is,
when the component being eliminated is not regulated by regulators of its targets.
Under this hypothesis, we demonstrated that minimal trap spaces are preserved by the reduction
(\cref{thm:min-ts}), and thus their control in the original and reduced networks are also related
(\cref{thm:min-ts-control}).
Further work could address the preservation of other properties related to the global
structure of trap spaces.

We limited our analysis to the classical elimination of non-autoregulated components.
Other types of reduction could be considered,
for instance, elimination of negatively regulated components,
which generalizes the elimination of components considered here \citep{schwieger2024reduction}.
The analysis can be extended to other types of control, for example temporal control or
control that acts on interactions \citep{su2020sequential,biane2018causal}.
All models imply a trade-off between complexity and level of detail attained,
while the consequences of simple differences in modelling choices are often difficult to predict.
Given the popularity of the reduction method analysed here,
these types of investigations can serve as useful references
in the context of logical modelling.

\subsection*{Funding}
ET was funded by the Deutsche Forschungsgemeinschaft (DFG, German Research Foundation)
under Germany's Excellence Strategy – The Berlin Mathematics Research Center MATH+
(EXC-2046/1, project ID: 390685689).
LP was funded by the French Agence Nationale pour la Recherche (ANR) in the scope of the
project ``BNeDiction'' (grant number ANR-20-CE45-0001).

\subsection*{Conflict of interest disclosure}

The authors have no conflict of interest to declare.

\subsection*{Code availability}
Several counterexamples referenced in \cref{table:summary} for the relationship of
control strategies between initial and reduced networks have been synthesized automatically
by logic (Answer-Set) programming.
The code is available at \url{https://github.com/pauleve/BN-example-generator}.

\bibliography{biblio}
\bibliographystyle{abbrvnat}

\end{document}